\renewcommand{\mu}{\upmu}
\newcommand{\beginsupplement}{%
    \setcounter{table}{0}
    \renewcommand{\thetable}{S\arabic{table}}%
    \setcounter{figure}{0}
    \renewcommand{\thefigure}{S\arabic{figure}}%
    \renewcommand{\thesubsection}{\arabic{subsection}}
 }
\newcommand{\ri}{\mathrm{i}}
\DeclareMathOperator*{\argmax}{argmax} 
\newtheorem{theorem}{\bf Theorem}
\newtheorem{remark}{\bf Remark}
\newenvironment{breakablealgorithm}
  {% \begin{breakablealgorithm}
    \begin{center}
      \refstepcounter{algorithm}% New algorithm
      \hrule height.8pt depth0pt \kern2pt% \@fs@pre for \@fs@ruled
      \parskip 0pt
      \renewcommand{\caption}[2][\relax]{% Make a new \caption
        {\raggedright\textbf{\fname@algorithm~\thealgorithm} ##2\par}%
        \ifx\relax##1\relax % #1 is \relax
          \addcontentsline{loa}{algorithm}{\protect\numberline{\thealgorithm}##2}%
        \else % #1 is not \relax
          \addcontentsline{loa}{algorithm}{\protect\numberline{\thealgorithm}##1}%
        \fi
        \kern2pt\hrule\kern2pt
     }
  }
  {% \end{breakablealgorithm}
     \kern2pt\hrule\relax% \@fs@post for \@fs@ruled
   \end{center}
  }
\begin{document}
\title[Zhu et al]{A Realizable GAS-based Quantum Algorithm for Traveling Salesman Problem}

\author{Jieao Zhu$^1$}
\author{Yihuai Gao$^1$}
\author{Hansen Wang$^2$}
\author{Tiefu Li$^{3, \dag, *}$}
\author{Hao Wu$^{4, \ddag, *}$}

% \author{Tiefu Li$^{3, *}$}
% \email{litf@tsinghua.edu.cn}
% \author{Hao Wu$^{4, *}$}
% \email{hwu@tsinghua.edu.cn}
\affiliation{
$^1$Department of Electronic Engineering, Tsinghua University, Beijing, China \\
$^2$Institute for Interdisciplinary Information Sciences, Tsinghua University, Beijing, China\\
$^3$School of Integrated Circuits, Tsinghua University, Beijing, China\\
$^4$Department of Mathematical Sciences, Tsinghua University, Beijing, China \\
}

\let\thefootnote\relax\footnotetext{$^\dag$ litf@tsinghua.edu.cn}
\let\thefootnote\relax\footnotetext{$^\ddag$ hwu@tsinghua.edu.cn}
\let\thefootnote\relax\footnotetext{$^*$ Corresponding authors }
% \footnote[0]{$^*$ Equal contribution}
% \footnote[0]{$^\dag$ litf@tsinghua.edu.cn}
% \footnote[0]{$^\ddag$ hwu@tsinghua.edu.cn}
\date{\today}

\begin{abstract}
    The paper proposes a quantum algorithm for the traveling salesman problem (TSP) based on the Grover Adaptive Search (GAS), which can be successfully executed on IBM's Qiskit library. Under the GAS framework, there are at least two fundamental difficulties that limit the application of quantum algorithms for combinatorial optimization problems. One difficulty is that the solutions given by the quantum algorithms may not be feasible. The other difficulty is that the number of qubits of current quantum computers is still very limited, and it cannot meet the minimum requirements for the number of qubits required by the algorithm. In response to the above difficulties, we designed and improved the Hamiltonian Cycle Detection (HCD) oracle based on mathematical theorems. It can automatically eliminate infeasible solutions during the execution of the algorithm. On the other hand, we design an anchor register strategy to save the usage of qubits. The strategy fully considers the reversibility requirement of quantum computing, overcoming the difficulty that the used qubits cannot be simply overwritten or released. As a result, we successfully implemented the numerical solution to TSP on IBM's Qiskit. For the seven-node TSP, we only need 31 qubits, and the success rate in obtaining the optimal solution is 86.71\%.
\end{abstract}

\keywords{Quantum computing, travelling salesman problem (TSP), Grover's adaptive search (GAS), qubit-saving techniques}

\maketitle

\section{Introduction}

% (Quantum computing: present, categories and applications)
Quantum computing is widely acknowledged as a revolutionary paradigm in computational technology, which enables significant speedup over classical computing for a wide range of problems. 
The power of quantum computing originates from quantum algorithms~\cite{montanaro2016quantum}, which can achieve exponential speedups on certain computing tasks thanks to quantum superposition~\cite{Frank2019Quantum}. 
According to different computational models, quantum algorithms can be divided into two broad categories, and they are intrinsically equivalent in polynomial time~\cite{aharonov2008adiabatic}. 
The algorithms based on the quantum gate model belong to the first category, such as Shor's integer factorization~\cite{shor1994algorithms}, Grover's unstructured database search~\cite{grover1996fast,boyer1998tight}, and variational quantum algorithms~\cite{cerezo2021variational, biamonte2017quantum}.% including QAOA~\cite{farhi2014quantum, farhi2016quantum}, VQE~\cite{kandala2017hardware} and QNN~\cite{beer2020training}.
Another category is the adiabatic quantum algorithm~\cite{farhi2000quantum, aharonov2008adiabatic, albash2018adiabatic}, which is based on the evolution of the ground state over a time-varying Hamiltonian. 
It uses quantum effects instead of thermal effects, enabling tunneling from one state to another~\cite{APOLLONI1989233,finnila1994quantum,das2008colloquium}. It is often understood as the quantum extension of the simulated annealing algorithm.

% (Quantum algorithms applied to combinatorial optimization.)
Due to the powerful quantum superposition characteristics of quantum computing, it has potential advantages in solving some NP-hard combinatorial optimization problems. 
Thus, the following problems have been initially explored: the Quadratic Unconstrained Binary Optimization problem~\cite{gilliam2021grover, harwood2021formulating, mcgeoch2013experimental, ushijima2021multilevel, das2008colloquium}, the Max-Cut problem~\cite{guerreschi2019qaoa,fuchs2021efficient}, the Quadratic Assignment Problem~\cite{khumalo2021investigation,ajagekar2019quantum}, and the Traveling Salesman Problem (TSP)~\cite{martovnak2004quantum,salehi2022unconstrained}. 
For the above problems, Grover Adaptive Search (GAS) algorithm~\cite{durr1996quantum} is considered to be the most competitive algorithm framework since it ensures quadratic speedup in most cases and provides exact solutions with guaranteed success probability~\cite{nielsen2002quantum}. 
Moreover, the quantum computer technology related to GAS and the quantum gate model has also developed rapidly in recent years~\cite{schmidt2003realization,han2000genetic,gilliam2021grover,ishikawa2021quantum}. 
Apart from GAS, other different quantum algorithms include quantum annealers~\cite{martovnak2004quantum, das2008colloquium}, which exploit the adiabatic evolution to find the solution, and Variational Quantum Algorithms (VQA)~\cite{guerreschi2019qaoa,harwood2021formulating}, which provide a hybrid quantum-classical approach to approximately solve combinatorial optimization problems. 
In this work, we focus on developing the GAS algorithm for TSP. 
We endeavor to discuss two fundamental difficulties in designing GAS-type algorithms for combinatorial optimization problems. 
One is the feasibility of the solution given by the quantum algorithms. The other is that the quantum algorithms require a large number of qubits, which currently cannot be achieved by quantum computers in the NISQ~\cite{preskill2018quantum} era.
Therefore, without loss of generality, we are devoted to addressing these two difficulties for TSP, and we hope that our efforts will provide insightful approaches and lead to more follow-up works on quantum combinatorial solvers.

% (The TSP and the sparse TSP.)
% Exact, Approximate, Neural, Heuristic. Four types of classical TSP algorithms.
TSP is one of the most well-known NP-hard combinatorial optimization problems. 
This problem considers a number of cities connected by paths of various  lengths, and the salesman tries to determine the shortest cyclic tour that visits each city exactly once. 
There have been extensive studies on this problem, e.g., the exact methods~\cite{laporte1992traveling, chauhan2012survey}, the approximation methods~\cite{christofides1976worst}, the heuristics methods~\cite{helsgaun2000effective, johnson1990local}, and the AI-based methods~\cite{bengio2021machine, lombardi2018boosting}. 
However, these methods always fail to overcome the hurdle of NP-hard problems. 
Thus, researchers hope to solve it by designing quantum algorithms. 
As we discussed earlier, there are two ways. 
One is the quantum annealing algorithm~\cite{heim2017designing, warren2013adapting, kieu2019travelling}, which exploits the quantum fluctuation property to find the TSP solution. 
Another approach is based on the gate model and GAS that belong to our interests. 
This paper is inspired by Srinivasan et al.'s work~\cite{srinivasan2018efficient}, which was unsuccessful since illegal tours could not be excluded. 
Later, IBM's Qiskit group\footnote{\url{https://qiskit.org/textbook/ch-paper-implementations/tsp.html}} tackled this problem with a correctly-designed phase estimation process. 
However, they mentioned that ``this process is imcomplete'', since all the legal Hamiltonian cycles need to be manually imported as candidates before the quantum algorithm is executed. 
In fact, finding all valid Hamiltonian cycles is inherently an NP-hard problem~\cite{akiyama1980np}. 
Moreover, there is no successful numerical experiments of Srinivasan et al.'s work, and IBM's Qiskit group only present the results of TSP with four nodes, which is obviously too trivial.

This work focuses on the two difficulties mentioned above. 
First, we present and prove the Cycle Determination Theorem, which converts the valid cycle determinations into problems that can be tested using quantum circuits. 
Based on the theorem, the Hamiltonian Cycle Detection (HCD) oracle is designed here.
Moreover, we also improved the Cycle Determination Theorem to determine the valid cycles with fewer tests by quantum circuits. 
Thus, the improved Hamiltonian Cycle Detection oracle is designed with fewer qubits and circuit layers. 
Secondly, we design the anchor register strategy for the problem that qubits can neither be overwritten nor be released due to the reversibility of quantum computation.
It significantly saves the use of the number of qubits.
This makes it possible to simulate some non-trivial problems on quantum computers or quantum simulators, such as seven-node TSP.

\section{Results}
% Introduction to the TSP.
    The classical $N$-city TSP is described as follows: each city $v_i\, (0\le i \le N-1)$ has $d_i\,(2\le d_i \le N-1)$ roads connected to other cities. And there is at most one road connecting any two cities.  %In this paper, the $N$-city TSP problem is solved by quantum computing, where each city $v_i$ is connected to a number of $d_i$ roads, $2\leq d_i \leq N-1, 0\leq i \leq N-1$. 
    The solution is the Hamiltonian cycle with the smallest cost. 
    The TSP graph is complete when $d_i=N-1, \forall i$, and it is called $d$-sparse when $d=\max_i\{d_i\}<N-1$.
    Note that in order to simplify the problem, a complete TSP problem can be converted into several sparse TSP problems by heuristically pruning the unlikely edges in the tour~\cite{wong2009efficient}.
    It is well-known that complete TSP and sparse TSP are both NP-hard~\cite{csaba2002approximability}. 
    The interest of this work is to design a quantum algorithm to give exact solutions to the above problems, and require lower complexity than classical algorithms. 
    
    As discussed earlier, we would like to establish a general approach to overcoming the difficulties of Grover's adaptive search for combinatorial optimization problems, starting with solving TSP. 
    Our main work is divided into three parts. 
    (1) Encode all the candidate TSP solutions into quantum states. 
    (2) Construct the Cycle Length Comparing (CLC) oracle that selects TSP cycles with a cost less than the threshold. 
    (3) Construct the Hamiltonian Cycle Detection (HCD) oracle that excludes illegal TSP tours. 
    From this, we can extract the desired solution from a uniform superposition state by the standard Grover's procedure~\cite{grover1996fast,gilliam2021grover}. 
    After several repetitions of quantum measurements, we can obtain the optimal solution with high probability.
    
\subsection{TSP Problem Encoding}
    For $N$-city TSP, the traveling salesman has at most $d\leq N-1$ choices of roads in each city, so we only need $m=\lceil \log_{2}{d} \rceil$ qubits to encode his choice.
    Accordingly, each candidate solution can be specified by a quantum eigenstate on $mN$ qubits. In the following, we refer it as the cycle register $\ket{C}$. 
    The undirected graph in TSP is represented by the adjacency matrix $A=(a_{i,j}), 0 \leq i,j < N$. 
    This is a symmetric matrix whose element $a_{i,j}> 0$ represents the traveling cost between two cities $i$ and $j$, where the self-loops $(i\to i)$ and the non-existing edges $(i\to j)$ are excluded by infinite edge costs: $a_{i,i}=\infty$, and $a_{i,j}=\infty$, respectively.   
    To store the connections, we use the adjacency list 
    \begin{equation}
        P_i = (0,1,\cdots, i-1, i+1,\cdots,N-1),\;\forall\, 0\leq i\leq N-1,
        \label{eqn:adj_list_def}
    \end{equation}
    where the self-loop edge $(i\to i)$ and the non-existing edges should be excluded from the adjacency list $P_i$, as is mentioned above. 
    Note that both the complete TSP $\#P_{i} = N-1,\; \forall i$, and the sparse TSP $\#P_i<N-1,\; \forall i$ can be processed by this encoding. 
    Thus, the cycle register $\ket{C}$ can be represented as follows
    \begin{equation}
        \ket{C} = \otimes_{i=0}^{N-1}\ket{C_i}=\otimes_{i=0}^{N-1}\ket{c_{i,(m-1)} \cdots c_{i,1}c_{i,0}}.
        \label{eq:cycle_register_form}
    \end{equation}
    Here $C_i$ encodes the path choice of the salesman at city $v_i$, and $c_{i,m-1}\cdots c_{i,1}c_{i,0}$ is the binary representation of $C_i$. 
    According to the adjacency lists \eqref{eqn:adj_list_def} and the integer value $C_i$, we can obtain the binary code of the next city $P_i[C_i]$ from the $i$-th city.
    
    Next, we present a simple example to demonstrate the qubit encoding scheme. For a $6$-city complete TSP, $N=6,\; d=5$, only $m=3$ qubits are required for encoding the choice. 
    The adjacency lists are given by 
    \begin{equation}
        \begin{aligned}
            P_{0} &= \{1, 2, 3, 4, 5\}, \quad P_{1} = \{0, 2, 3, 4, 5\}, \\
            P_{2} &= \{0, 1, 3, 4, 5\}, \quad P_{3} = \{0, 1, 2, 4, 5\}, \\
            P_{4} &= \{0, 1, 2, 3, 5\}, \quad P_{5} = \{0, 1, 2, 3, 4\}.  
        \end{aligned}
        \label{eqn:adjacency_lists}
    \end{equation}
    \begin{table}
        \centering 
        \caption{A simple example of the qubit encoding scheme.}
        \label{tab:cycle_register_example}
        \begin{tabular}{c|c|c|c|c|c|c}
            \toprule[1.5pt]
                City index $i$ & 0 & 1 & 2 & 3 & 4 & 5 \\ \hline 
                $\ket{C_i}$ & $\ket{010}$ & $\ket{011}$ & $\ket{001}$ & $\ket{100}$ & $\ket{000}$ & $\ket{010}$ \\ \hline 
                $C_i$ & 2 & 3 & 1 & 4 & 0 & 2 \\ \hline 
                Next city $P_i[C_i]$ & 3 & 4 & 1 & 5 & 0 & 2 \\ 
            \bottomrule[1.5pt]
        \end{tabular}
    \end{table}
    For example, in {\bf Table.~\ref{tab:cycle_register_example}}, the quantum state 
    \begin{equation*}
        \ket{C}=\ket{010}\otimes\ket{011}\otimes\ket{001}\otimes\ket{100}\otimes\ket{000}\otimes\ket{010},
    \end{equation*}
    represents the Hamiltonian cycle 
    \begin{equation*}
        0\rightarrow 3\rightarrow 5 \rightarrow 2 \rightarrow 1 \rightarrow 4 \rightarrow 0.
    \end{equation*}

\subsection{Framework}
    Next, we sketch the ingredients of our Grover Adaptive Search~\cite{gilliam2021grover}-based Quantum Algorithm for Traveling Salesman Problem (GQ-TSP). It consists of three major steps (see also {\bf Fig.~\ref{fig:top_design}} for quantum circuit):

    \begin{enumerate}
        \item Initialization. The cycle register $\ket{C}$, as a group of qubits, is initialized to the uniform superposition state. And the result qubit $\ket{R}$ is initialized to $\ket{-}$ for Grover's search.
        \item Grover iteration. For $k=1,2,\cdots,I_{\rm opt}$,  execute the following five sub-procedures. 
            \begin{itemize}
                \item Label the states whose length are smaller than the threshold $C_{\rm th}$ by the CLC oracle.
                \item Label the states that correspond to the legal cycles by the HCD oracle.
                \item Flip the result qubit $\ket{R}$ through a C$^2$NOT gate whenever the states are labeled by both of the two oracles.  
                \item Apply the CLC oracle and HCD oracle again to release the intermediate qubits for reuse.
                \item Complete the Grover iteration by the diffusion operator $2\ket{s}\bra{s}-I$, where $\ket{s}$ is the uniform superposition state.
            \end{itemize}
        \item Measurement. The cycle register $\ket{C}$ is measured to obtain the solution to TSP. 
    \end{enumerate}
    \begin{figure*}
        \centering
        \includegraphics[width=0.8\linewidth]{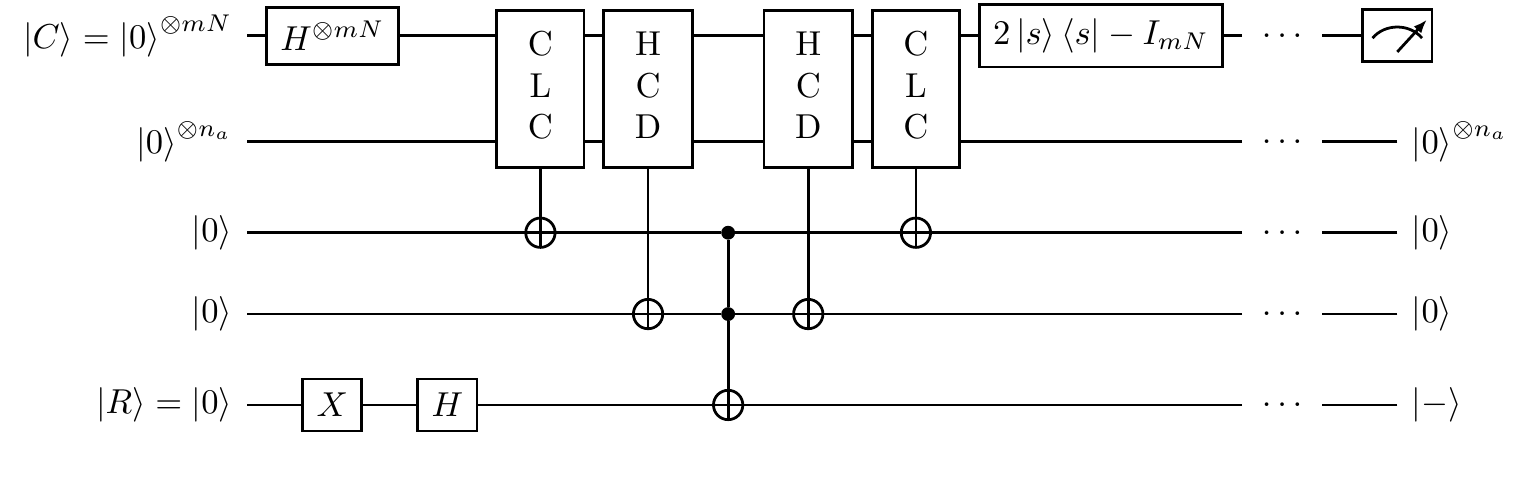}

        \caption{Illustration of the quantum circuit of GQ-TSP. }
        \label{fig:top_design}
    \end{figure*}

    Thus, the cycle register $\ket{C}$ contains the probability information of all the cycles that satisfy the requirements of the HCD and CLC oracles. 
    Moreover, the Hamiltonian cycles with smaller costs have larger probability amplitude. 
    Thus, a timely quantum measurement yields a valid optimal cycle with a high probability. 
    It should be mentioned that $I_{\rm opt}$ and $C_{\rm th}$ are very important to the effect of the algorithm and will be discussed in detail later.

\subsection{Cycle Length Comparing Oracle}
    The purpose of designing the Cycle Length Comparing (CLC) oracle is to filter all the candidate solutions according to their cycle costs, which is defined by
    \begin{equation}
        {\rm cost}(\ket{C}) = \sum_{j=0}^{N-1} a_{j, P_j[C_j]}.
        \label{eqn:cost_definition}
    \end{equation}
    Following Srinivasan, et.al.'s work~\cite{srinivasan2018efficient}, the quantum computation of \eqref{eqn:cost_definition} is realized by the controlled $U$-operator, which is composed of $N$ smaller operators $U_j, 0\leq j<N$ defined as 
    \begin{equation}
        U_j={\rm diag}\left( \exp\left(\ri \theta _{j, 0}\right),  \exp\left(\ri\theta _{j, 1}\right), \cdots, \exp\left(\ri\theta _{j, 2^{m}-1}\right)\right), 
        \label{eq:U_op}
    \end{equation}
    \begin{equation}
        \textrm{with}\;\theta_{j,k} = \left\{ \begin{aligned}  
            a_{j,P_j[k]}, \quad &0\leq k < \#P_j\\
            0,\quad & k\geq \#P_j.
        \end{aligned}\right.
    \end{equation}
    Corresponding to different path choices, $U_j$ introduces different phase shifts, i.e. 
    \begin{equation}
        U_j\ket{C_j}=\exp(\ri \theta_{j, C_j})\ket{C_j}.
    \end{equation}
    Thus, the effect of the $U$-operator on $\ket{C}$ is given as follows
    \begin{equation}
        \begin{aligned}
            U\ket{C}&=\bigotimes_{j=0}^{N-1}U_j\ket{C_j}=\bigotimes_{j=0}^{N-1}\exp(\ri \theta_{j,C_j})\ket{C_j}\\
            &=\prod_{j=0}^{N-1}\exp(\ri \theta_{j,C_j})\bigotimes_{j=0}^{N-1}\ket{C_j}=\exp(\ri\cdot{\rm cost}(\ket{C})) \ket{C}.
        \end{aligned}
    \end{equation} 
    The $U$-operator acts as the cost computation module in the quantum circuit of the CLC oracle. 
    Later, the detailed implementation of the $U$-operator will be discussed in the {\bf Methods} section, in which we will elaborate on the techniques to reduce the number of qubits. 

    We now turn to the construction of the CLC oracle.
    We can divide it into three steps (see also {\bf Fig.~2} for illustration).
    First, apply the standard phase estimation (QPE) routine~\cite{nielsen2002quantum} to obtain the cost values.
    Here we need to execute the $U$-operator for arbitrary $\ket{\ell}$ times, where $\ket{\ell}$ is prepared into a uniform superposition state by the $t$ Hadamard gates. 
    The next thing is to label the states with costs smaller than $C_{\rm th}$. 
    Finally, we have to free up the intermediate ancillary qubits for reuse by the mirrored gates.  

    \begin{figure*}
        \centering
        \includegraphics[width=0.7\linewidth]{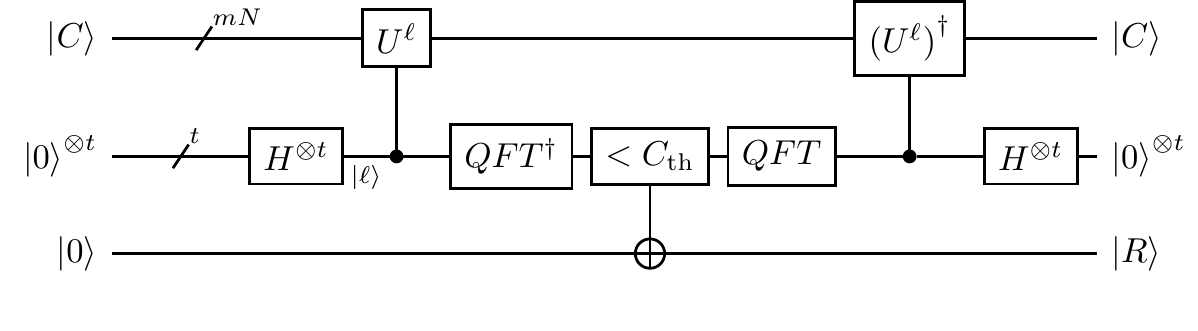}

        \caption{Quantum circuit of the CLC oracle.}
        \label{fig:CLC oracle}
    \end{figure*}

\subsection{Hamiltonian Cycle Detection Oracle} 
    The purpose of designing the Hamiltonian Cycle Detection (HCD) oracle is to pick out the legal Hamiltonian cycles. Its implementation relies on the following theorem.
    \begin{theorem}[Cycle Determination Theorem]  \label{thm_cycle_determination}
        Consider the function $\pi: \{0,1,\cdots,N-1\}\to \{0,1,\cdots,N-1\}$ that corresponds to the cycle register $\ket{C}$. Then $\ket{C}$ is a Hamiltonian cycle if and only if: 1) $\forall 1 \leqslant j \leqslant N-1$, $\pi^j(0) \neq 0$,\quad  2) $\pi^N(0) = 0$.
    \end{theorem}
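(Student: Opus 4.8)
The plan is to prove the two implications separately, with essentially all of the substance concentrated in the ``if'' direction. Throughout, $\pi$ denotes the successor map induced by the register, i.e. $\pi(i) = P_i[C_i]$, and $\pi^j$ its $j$-fold composition.

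\textbf{Necessity ($\ket{C}$ a Hamiltonian cycle $\Rightarrow$ 1), 2)).} This direction is almost a restatement of the definition. If $\ket{C}$ encodes a Hamiltonian cycle, then starting from city $0$ and repeatedly applying $\pi$ traverses all $N$ cities exactly once before returning, so $0,\pi(0),\dots,\pi^{N-1}(0)$ are pairwise distinct and $\pi^N(0)=0$. Pairwise distinctness immediately gives $\pi^j(0)\neq 0$ for $1\le j\le N-1$ (a return to $0$ before step $N$ would repeat the city $0$), which is condition 1, and $\pi^N(0)=0$ is condition 2. I would dispatch this in a couple of sentences.

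\textbf{Sufficiency (1), 2) $\Rightarrow$ $\ket{C}$ a Hamiltonian cycle).} Here is the real work. The obstacle is that $\pi$ need not be injective: two distinct cities may list the same successor, so from $\pi^a(0)=\pi^b(0)$ one cannot simply ``cancel'' a power of $\pi$. I would get around this by using condition 2 to wrap the trajectory around. First introduce the first-return time $r:=\min\{\, j\ge 1 : \pi^{j}(0)=0\,\}$; condition 2 shows this set contains $N$, so $r\le N$, while condition 1 shows no $j\in\{1,\dots,N-1\}$ lies in it, so $r\ge N$; hence $r=N$. Next I claim the cities $0,\pi(0),\dots,\pi^{N-1}(0)$ are pairwise distinct. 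Suppose $\pi^{a}(0)=\pi^{b}(0)$ with $0\le a<b\le N-1$. Applying $\pi^{\,N-b}$ to both sides (legitimate since $N-b\ge 1$) and using $\pi^{N}(0)=0$ gives $\pi^{\,a+N-b}(0)=0$. But $0\le a<b\le N-1$ forces $1\le a+N-b\le N-1$, contradicting condition 1 (equivalently, contradicting minimality of $r=N$). Therefore the orbit of $0$ under $\pi$ consists of $N$ distinct cities, i.e. it is all of $\{0,1,\dots,N-1\}$; together with $\pi^{N}(0)=0$ this means $0\to\pi(0)\to\cdots\to\pi^{N-1}(0)\to 0$ visits every city exactly once and closes up, which is precisely a Hamiltonian cycle.

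I expect the only genuinely delicate point to be the distinctness step in the sufficiency direction — specifically the index bookkeeping $1\le a+N-b\le N-1$ that makes the wrapped-around iterate land strictly inside the range forbidden by condition 1. The non-injectivity of $\pi$ is exactly what rules out the naive cancellation argument and forces this detour through the first-return time; once that is set up, everything else is bookkeeping. If desired, the same idea can be phrased geometrically: conditions 1 and 2 say the $\rho$-shaped trajectory of $0$ has first-return time exactly $N$, which leaves no room for a pre-period tail and pins the cycle length to $N$.
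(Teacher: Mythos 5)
Your proof is correct and follows essentially the same route as the paper: the key step in both is the wrap-around argument that turns a hypothetical collision $\pi^{a}(0)=\pi^{b}(0)$ into a forbidden early return $\pi^{j}(0)=0$ with $1\le j\le N-1$ by composing with enough further applications of $\pi$ to exploit $\pi^{N}(0)=0$ (you apply $\pi^{N-b}$, the paper applies $\pi^{N-i}$ — same trick). The only difference is that you conclude directly from the $N$ distinct orbit elements that the tour is a Hamiltonian cycle, whereas the paper adds a (logically redundant) appeal to the permutation cycle-decomposition theorem; your ending is the cleaner of the two.
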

    \begin{proof}
        The necessity is trivial. Here we only prove the sufficiency part. We first show that $\pi$ is a permutation. 
        In fact, we only need to prove the following:
            \begin{equation*}
                \pi^i(0)\neq \pi^{i+s}(0), \quad \forall 0\leq i < i+s < N.
                \label{eqn:unique_elements_1}
            \end{equation*}
            If not, there would exist $\pi^i(0)=\pi^{i+s}(0)$ for some $0<s< N-i$. Applying $\pi^{N-i}$ on both sides yields
            \begin{equation*}
                0=\pi^{N-i}(\pi^i(0))=\pi^{(N-i)+(i+s)}(0)=\pi^s(0),
            \end{equation*}
            which contradicts the assumption 1). 
            
            It remains to be shown that $\pi$ is a cycle. According to the permutation decomposition theorem~\cite[Theorem 1.1]{rotman2012introduction},
            \begin{equation*}
                \pi=\tau_1\tau_2\cdots\tau_k,
            \end{equation*} 
            where $\tau_i,\;1\le i\le k$ are disjoint cycles, and $k\geq 1$.
            We would like to prove $k=1$. If not, then there exists $\tau_j$ that satisfies 
            \begin{equation*}
                \tau_j^{\ell}(0)=0,\;\ell:={\rm ord}(\tau_j)<N.
            \end{equation*}
            Thus, we have 
            \begin{equation*}
                \pi^{\ell}(0)=0,
            \end{equation*}
            which leads to a contradiction. 
    \end{proof}
    Based on the above {\bf Theorem~\ref{thm_cycle_determination}}, we present the detailed design of the HCD oracle. The HCD oracle contains the following two steps: 
    \begin{enumerate}
        \item Traveling (see {\bf Fig.~\ref{fig:index_forwarder}} for illustration). According to the cycle register $\ket{C}$, the locations $\ket{I_j}$, $j=1,2,\cdots,N$ of the salesman at each step are calculated sequentially by the index forwarder $F$ gates.
        \item Checking (see {\bf Fig.~\ref{fig:Hamiltonian Cycle Detector}} for illustration). Perform checks on each of the locations, and detect whether the cycle $\ket{C}$ is valid according to {\bf Theorem~1}. 
        The checking part is described by the following quantum-logical expression:
        \begin{equation}
            R_{\rm HCD} = \left(\prod_{j=1}^{N-1}{\rm OR}(\ket{I_j})\right)\cdot{\rm NOR}(\ket{I_N}),
            \label{eqn:HCD_logical}
        \end{equation}
        where the product sumbol $\prod_j$ denote the logical AND operation, and the gate ${\rm OR}(\ket{I_j})$ represents bitwise logical OR operation on each of the $n$ qubits in the location register $\ket{I_j}$.
    \end{enumerate}
    After applying the HCD oracle, if the result qubit is initialized in superposition to $\ket{-}:=(\ket{0}-\ket{1})/\sqrt{2}$, then the state after applying the HCD oracle is described by 
    \begin{equation*}
        {\rm HCD}(\ket{C}\ket{-}) = (-1)^{R_{\rm HCD}}\ket{C}\ket{-},
    \end{equation*} 
    which introduces a cycle-specific negative sign to label the valid cycles.  

    In practice, all the checking procedures will be inverted by an ``uncomputation'' process to restore the ancillary qubits ${\rm OR}(\ket{I_1}),\cdots,{\rm OR}(\ket{I_6})$ to their initial state $\ket{0}$.     
    It is worth mentioning that the index forwarder $F$ plays an important role in the traveling part. And we will describe its detailed implementation in the {\bf Methods} section. 
    \begin{figure}
        \centering
        \includegraphics[width=\linewidth]{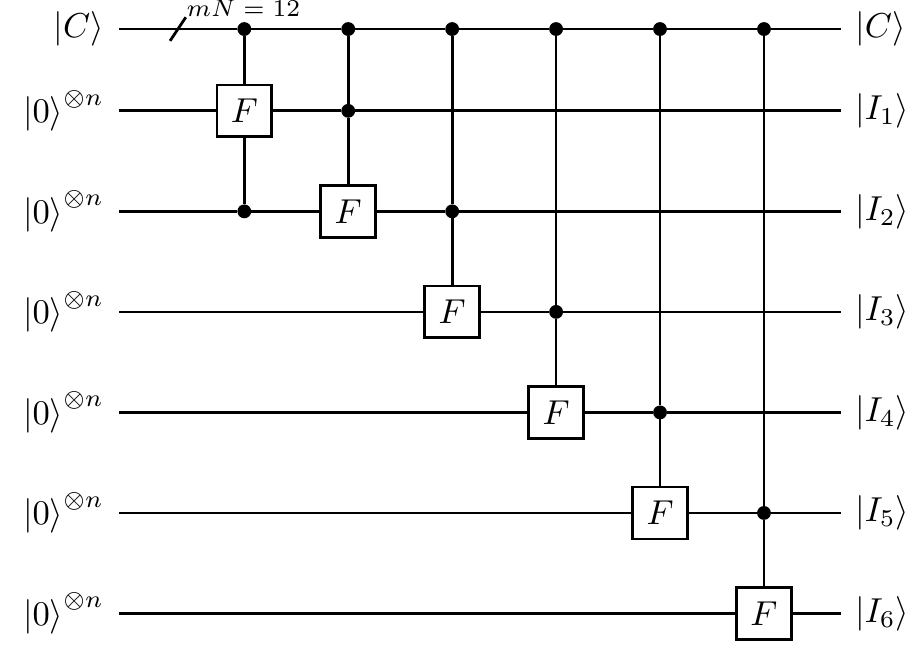}
        
        \caption{Quantum circuit illustration ($N=6$) of the traveling part of the HCD oracle.}
        \label{fig:index_forwarder}
    \end{figure}

    \begin{figure}
        \centering
        \includegraphics[width=\linewidth]{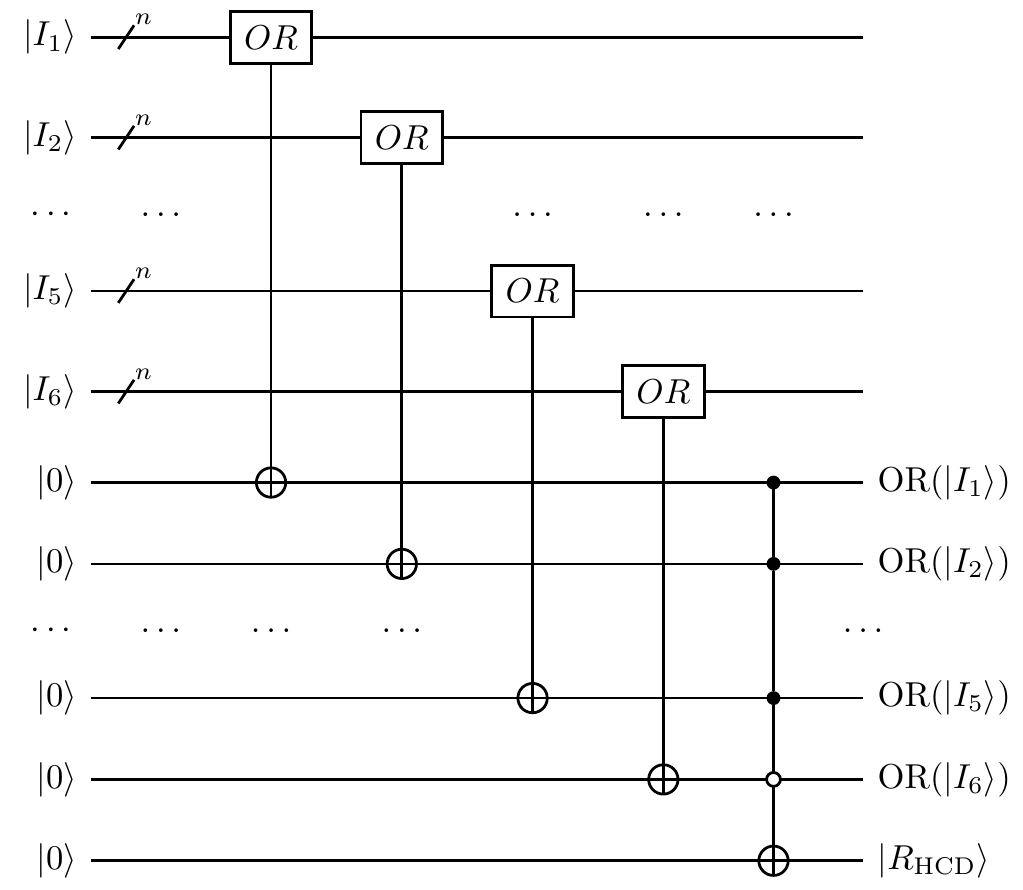}
            
        \caption{Quantum circuit illustration ($N=6$) of the checking part of the HCD oracle.}
        \label{fig:Hamiltonian Cycle Detector}
    \end{figure}

    Below,  we discuss the qubit consumption of the HCD oracle. 
    To record the salesman's locations $\ket{I_k}$, $nN$ qubits are required, with $N$ and $n$ denoting the number of cities for TSP and the number of qubits for binary-encoding a city index, respectively.  
    And it also needs $N$ qubits to temporarily store the checking results ${\rm OR}(\ket{I_k}),\, k=1,\cdots,N$. 
    In terms of the current scale of quantum computers, the algorithm requires a large number of qubits. 
    Below, we will focus on techniques to save qubits as much as possible to ensure that the qubit number requirements of our algorithm can be satisfied by existing quantum computers. 

\subsection{Improved Hamiltonian Cycle Detection Oracle.}
    % As mentioned above, the HCD oracle requires a lot of qubits. 
    % In this section, we reduce the qubit consumption by reducing the number of checks ({\bf Fig.~4}). 
    % It is based on our improvements of {\bf Theorem \ref{thm_cycle_determination}}.
    {\bf Theorem~\ref{thm_cycle_determination}} provides the necessary and sufficient conditions of a Hamiltonian cycle, based on which we have designed the traveling and checking algorithms for the HCD oracle. 
    However, it is not necessary to track all the cities on the salesman's tour. 
    Instead, checking some of these tour steps suffices to decide a Hamiltonian cycle. This is based on our improvement of {\bf Theorem~\ref{thm_cycle_determination}}:
    \begin{theorem}[Improved Cycle Determination Theorem]  \label{thm_cycle_determination_2}
        Suppose the function $\pi: \{0,1,\cdots,N-1\}\to\{0,1,\cdots,N-1\}$ represents $\ket{C}$. Then $\ket{C}$ is a Hamiltonian cycle if and only if: 1) $\forall j|N: 1 \leqslant j \leqslant N-1$, $\pi^j(0) \neq 0$; \quad 2) $\pi^N(0) = 0$.
    \end{theorem}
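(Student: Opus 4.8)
The plan is to reduce to Theorem~\ref{thm_cycle_determination} by showing that the weaker hypothesis stated here already forces $\pi^j(0)\neq 0$ for \emph{every} $1\le j\le N-1$, not merely for the divisors of $N$. The necessity direction is immediate: if $\ket{C}$ is a Hamiltonian cycle then the associated $\pi$ is a single $N$-cycle, so $\pi^j(0)=0$ holds exactly when $N\mid j$; in particular $\pi^j(0)\neq 0$ for every proper divisor $j$ of $N$, and $\pi^N(0)=0$.

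For sufficiency, I would first introduce the first return time $d:=\min\{\,j\ge 1:\pi^j(0)=0\,\}$, which is well defined with $1\le d\le N$ because condition~2) guarantees that $N$ belongs to this set. The crux is a short number-theoretic observation about the set $S:=\{\,j\ge 1:\pi^j(0)=0\,\}$. It is closed under addition, since $\pi^{a+b}(0)=\pi^{b}(\pi^{a}(0))=\pi^{b}(0)=0$ whenever $\pi^{a}(0)=0$; and it is closed under positive differences, since if $a>b$ lie in $S$ then $\pi^{a-b}(0)=\pi^{a-b}(\pi^{b}(0))=\pi^{a}(0)=0$. A standard division-with-remainder argument then gives $S=d\mathbb{Z}_{>0}$: writing any $j\in S$ as $j=qd+r$ with $0\le r<d$ and repeatedly subtracting $d$ shows $r\in S\cup\{0\}$, hence $r=0$ by minimality of $d$. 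In particular $d\mid N$.

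Next I would rule out $d<N$. If $d<N$, then $d$ is a divisor of $N$ with $1\le d\le N-1$, so hypothesis~1) applies and yields $\pi^d(0)\neq 0$, contradicting the definition of $d$. Therefore $d=N$, which is precisely the statement that $\pi^j(0)\neq 0$ for all $1\le j\le N-1$. Combined with $\pi^N(0)=0$, the hypotheses of Theorem~\ref{thm_cycle_determination} are satisfied, and that theorem gives that $\ket{C}$ is a Hamiltonian cycle.

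This is essentially a clean reduction rather than a fresh argument, so I do not expect a serious obstacle. The only step needing care is the claim $S=d\mathbb{Z}_{>0}$, i.e.\ that the first return time to $0$ divides every return time (and in particular divides $N$); once this is in place, everything else follows by invoking the already-established Theorem~\ref{thm_cycle_determination}.
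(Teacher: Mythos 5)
Your proof is correct and follows essentially the same route as the paper's: both take the minimal return time of $0$ under $\pi$, use division with remainder to show it must divide $N$ (the paper phrases this as deriving a smaller return time $r_0$ for a contradiction), invoke hypothesis~1) to rule out a proper divisor, and then reduce to Theorem~\ref{thm_cycle_determination}. Your packaging of the return-time set as $d\mathbb{Z}_{>0}$ is a slightly more structured presentation of the identical argument.
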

    \begin{proof}
        The necessity is trivial. We prove the sufficiency of the theorem in the following. \par
        We claim that $\forall j: 1\leq j <N$, $\pi^j(0)\neq 0$. If not, then there exists some $1\leq j<N$ such that $\pi^j(0)=0$. We choose the minimum possible $j$. According to assumption (1), $j\nmid N$. Let $N=kj+r_0$, where $1\leq r_0<j$ is the remainder of $N$ divided by $j$. Thus, we have
        \begin{equation}
            0=\pi^N(0) = \pi^{kj+r_0}(0)=\pi^{r_0}(\pi^{kj}(0)) = \pi^{r_0}(0),
        \end{equation}
        which means that we find a smaller positive integer $r_0<j$ such that $\pi^{r_0}(0)=0$, leading to a contradiction. Thus, the sufficiency is proved immediately from {\bf Theorem~\ref{thm_cycle_determination}}. 
    \end{proof}
    Based on the above {\bf Theorem~\ref{thm_cycle_determination_2}}, we can replace the HCD checking step~\eqref{eqn:HCD_logical} with the following simplified formula 
    \begin{equation}
        R_{\rm HCD} = \left(\prod_{1\leq j <N, j|N} {\rm OR}(\ket{I_j}) \right)\oplus {\rm OR}(\ket{I_N}). 
    \end{equation} 
    This replacement significantly reduces the number of ${\rm OR}$ operations and the number of corresponding qubits that are required in the checking step of the HCD oracle. (see {\bf Fig.~\ref{fig:Hamiltonian Cycle Detector-improved}}) 

    \begin{remark}
        If the problem size $N$ is close to some large prime number, increasing the problem size may unexpectedly reduce the qubit consumption. 
        An interesting observation occurs when $N=16$. Since 16 has 5 divisors $\{1,2,4,8,16\}$, a total number of 5 checks are needed in the HCD oracle. However, if we increase the problem size by one city, i.e., $N=17$, only 2 checks are required, since 17 is a prime number. 
    \end{remark}

    \begin{figure}
        \centering
        \includegraphics[width=\linewidth]{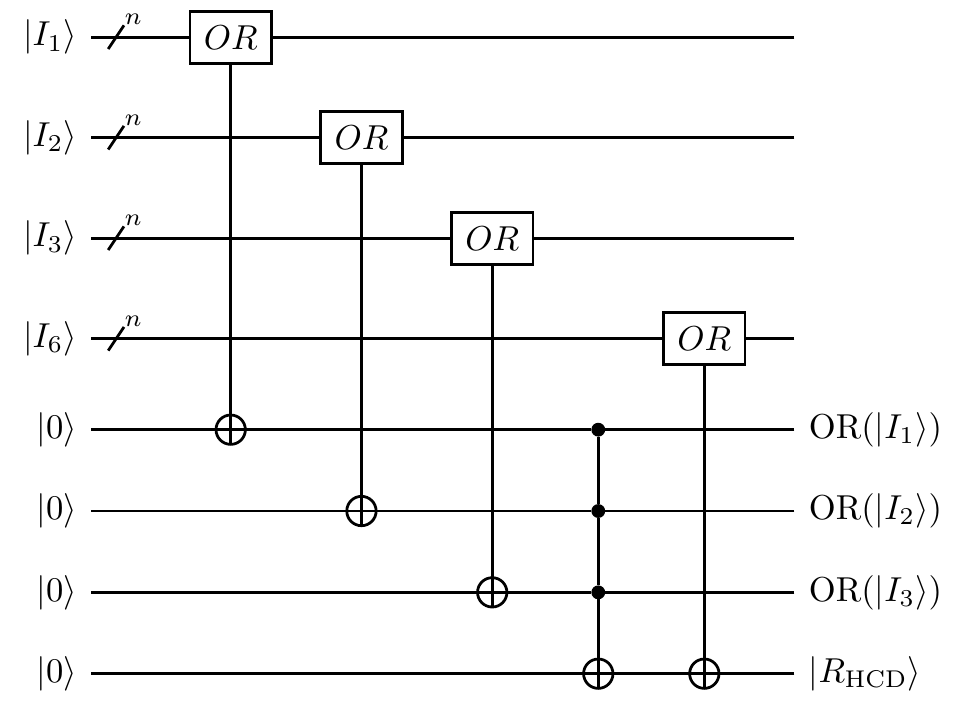}
            
        \caption{Quantum circuit illustration ($N=6$) of the checking part of the improved HCD oracle.}
        \label{fig:Hamiltonian Cycle Detector-improved}
    \end{figure}

\subsection{The anchor register strategy. }
    Qubit is always a scarce resource~\cite{almudever2017engineering} on both quantum simulators and real-world NISQ quantum computers.
    As we have already mentioned in the problem encoding scheme, $nN$ qubits are required to record the salesman's locations $\ket{I_j}$, which quickly becomes computationally unaffordable when applied to some non-trivial TSP problem sizes, for example, $N=6$, due to the scarcity of the qubit resource.  
    
    Fortunately, in the HCD oracle, we only need to store the result qubits ${\rm OR}(\ket{I_k})$ instead of the whole $n$-length location registers $\ket{I_k}$, implying that these location registers $\ket{I_k}$ may be reused.  
    On the other hand, directly erasing and reusing $\ket{I_k}$ after the calculation of $\ket{I_{k+1}}$ is prohibited since the erasure causes a quantum state collapse. 
    Thus, we introduce a small number of anchor registers $\ket{A_\ell}, \ell=1,2,\cdots,L$ to store part of the intermediate results. 
    These anchor registers enables us to reversibly uncompute most of the $\ket{I_k}$'s to save qubit usage. 
    If $L$ anchor registers and $k$ location registers are employed, the number of required qubits is reduced to $n(L+k)$, which significantly reduces the qubit consumption compared with storing all the $\ket{I_k}$. 
    This is realized by the following algorithm, where it is assumed that $N=L(k+1)$:

    \begin{enumerate}
        \item Initialization.
        \item For $i=1,2,\cdots,L$ do
        \begin{itemize}
            \item Apply $F$ gates to calculate the $k$ location registers $\ket{I_j}$, $(i-1)(k+1)+1\leq j \leq i(k+1)-1$ from $\ket{A_{i-1}}$.
            \item Calculate the corresponding ${\rm OR}(\ket{I_j})$.
            \item Calculate $\ket{A_{i}}$ by applying $F$ gate to $\ket{I_{i(k+1)-1}}$.
            \item Use $\ket{A_{i-1}}$ to sequentially free $\ket{I_j}$ by uncomputation. 
        \end{itemize}
        \item Return the checking results ${\rm OR}(\ket{I_j})$. 
    \end{enumerate}

    It is worth noting that, in the above algorithm, the assumption $N=L(k+1)$ can be dropped, i.e., $N$ need not be a multiple of $L$.  
    It suffices to let the number of anchor registers $L=\lfloor\frac{N}{k+1}\rfloor$, where $k$ is the number of intermediate location registers $\ket{I_{(i-1)(k+1)+1}},\cdots,\ket{I_{i(k+1)-1}}$. 
    For ${\rm OR}(\ket{I_j})$, this calculation is only needed when $j|N$. 
    For the uncomputation process, since the location registers are reused for different $i$, the total number of qubits required can be reduced from $nN$ to $n(k+L)$, at the cost of increasing the total circuit depth by a factor of 2 due to the uncomputation.  

    The optimal tradeoff between the number of anchor registers $L$ and the number of intermediate location registers $k$ are given by 
    \begin{equation*}
        k_{\rm opt}={\rm argmin}_k\left\{n\left(\lfloor\frac{N}{k+1}\rfloor+k\right)\right\},
    \end{equation*} 
    where the total number of required qubits is minimized. 

    The overall design principle is illustrated in {\bf Fig.~\ref{fig:simplified_oracle}}. 

    \begin{figure*}
        \centering
        \includegraphics[width=0.9\linewidth]{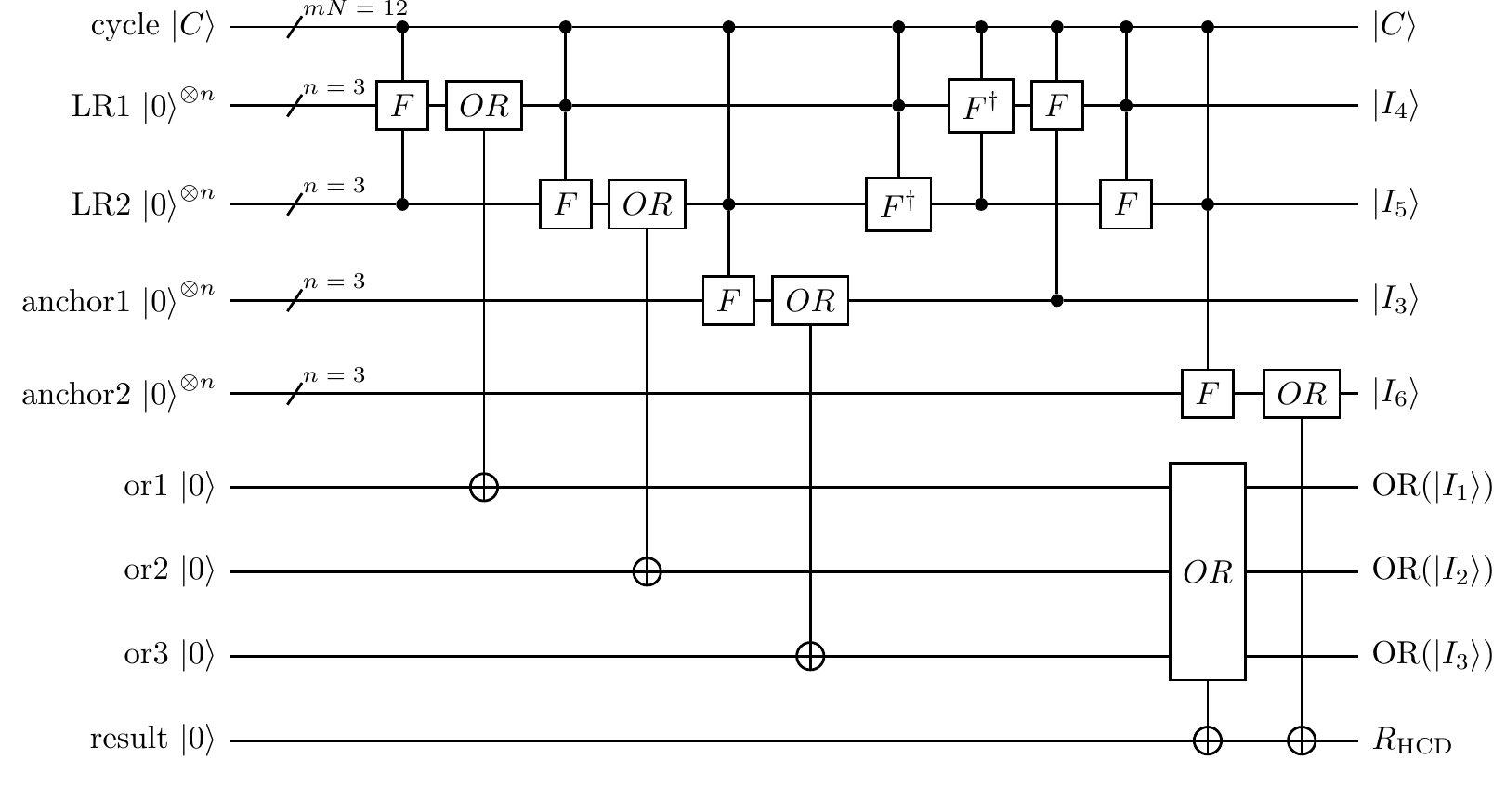}
    
        \caption{Quantum circuit of the improved HCD oracle with the ``anchor'' qubit-saving strategy. }
        \label{fig:simplified_oracle}
    \end{figure*}

\subsection{Simulation Results}
    In this section, we implement our proposed GQ-TSP for graphs with city number $N=4, 5, 6, 7$ and provide the numerical results. Specifically, we consider complete graphs for $N=4, 5$; and 4-sparse graph, i.e. graph degree $d=4$ for $N=6, 7$ due to the limitation of simulated qubit number.
    For $N\geq 5$, the sparse encoding is adopted to reduce the number of required qubits.  
    We generate our test graphs by uniformly picking $N$ cities within the square $[0,1]^2$, see {\bf Fig.~\ref{fig:results}(a-d)} for illustration. The distances between cities follow the Euclidean distance
    \begin{equation*}
        d_{ij}=\|x_i-x_j\|^2,\quad x_i, x_j\in[0,1]^2. 
    \end{equation*}
    
    The same quantum circuit is executed for $N_s=1024$ times to obtain statistical data of final quantum state $\ket{C}$. Then, the threshold $C_{\rm th}$ is updated based on the sampled results. For the CLC oracle, the QFT precision is set to $t=6$ qubits. 
    We implement our proposed GQ-TSP with the IBM Qiskit simulator, and run simulations with a Windows 11 PC equipped with an Intel i5-12400 CPU and 32GB RAM. 

    {\bf Fig.~\ref{fig:results}(e-h)} illustrate the sample probability of the top-3 valid cycles, where purple solid line, green dashed line and blue dotted line represent the 1st, 2nd and 3rd short cycle, respectively. 
    The sample probability curves exhibit a sinusoidal waveform, which coincides the rotation interpretation~\cite{nielsen2002quantum} of Grover's search algorithm. Specifically, the sample probability of the shortest cycle reaches the highest at the optimal Grover iteration $I_{\rm opt}$, which is close to the theoretical estimation $\hat{I}_{\rm opt}$ derived from the following formula~\cite{nielsen2002quantum} if the threshold $C_{\rm th}$ excludes all the sub-optimal solutions:
    \begin{equation}
        \hat{I}_{\rm opt} = {\rm CI}\left(\frac{\pi}{2\theta}-\frac{1}{2}\right),
        \label{eqn:iter_number}
    \end{equation}
    where $\theta=2\arcsin(1/2^{(m-1)/2})$, and ${\rm CI}(x)$ represents the integer closest to the real number $x$. The optimal sample probability of the top-3 cycles are shown in {\bf Table.~\ref{tab:sim}}, where $p_1, p_2, p_3$ represents the probability of 1st, 2nd, 3rd shortest cycle, respectively. The probability of the shortest cycle at $N=6$ is lower in that the length of the 2nd and 3rd short cycles are close to the shortest one. This also leads to $I_{\rm opt} < \hat{I}_{\rm opt}$ when $N=6$. We will look more deeply into this issue in the {\bf Discussion} section.

    \begin{table}
        \centering
        \begin{threeparttable}[b]
            \caption{Simulation Results} \label{tab:sim}
            \setlength{\tabcolsep}{3mm}
            \begin{tabular}{c c c c c c}
                \toprule[1.5pt]
                $N$ &  $I_{\rm opt}$ & $\hat{I}_{\rm opt}$ & $p_1$    & $p_2$    & $p_3$    \\ \hline
                4 & 8           & 8                 & 0.9873 & 0.0185 & 0.0136 \\
                5 & 17          & 17                & 0.9385 & 0.0234 & 0.0186 \\
                6 & 27          & 35                & 0.8695 & 0.0617 & 0.0590 \\
                7 & 62          & 71                & 0.8672 & 0.1250 & 0.0127 \\
                \bottomrule[1.5pt]
            \end{tabular}
        \end{threeparttable}
    \end{table}

    {\bf Fig.~\ref{fig:results}(i-l)} is the observed probability distribution over all the edges stored in the cycle register $\ket{C}$, when the number of iterations reaches $1/3$ of the optimal iteration number $I_{\rm opt}$. Darker edge color means the corresponding edge is observed at a higher probability. It can be seen from the figure that, the optimal solution can be observed at a reasonable probability even if only $1/3$ of the required iterations are performed. 
    {\bf Fig.~\ref{fig:results}(m-p)} shows the observed probability distribution over the edges when the optimal number of Grover iteration is reached. At this optimal number, the optimal cycle can be observed with the highest probability. As long as the optimal solution is unique, there exists such $C_{\rm th}$ that ensures the selection of the optimal solution when the QFT precision $t$ is sufficiently large. 
    
    By comparing along the columns of {\bf Fig.~\ref{fig:results}}, it is shown that the optimal TSP tour can be significantly amplified by the proposed GQ-TSP algorithm. 
    
    \begin{figure*}[t]
        \centering
        \includegraphics[width=0.8\linewidth]{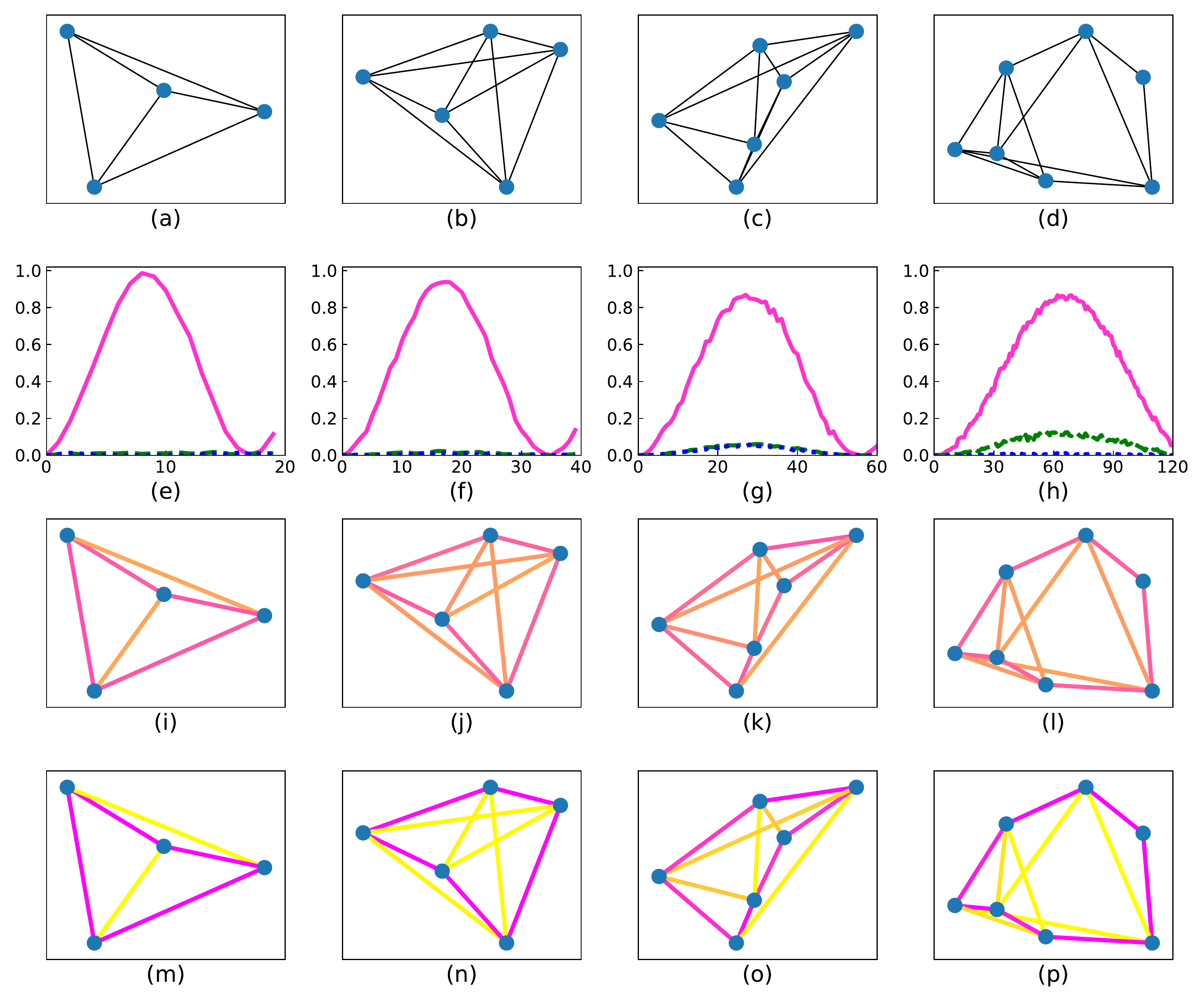}
        \caption{Numerical results of the proposed GQ-TSP. Columns 1-4 represent $N=4,5,6$, and $7$, respectively. }
        \label{fig:results}
    \end{figure*}

    In {\bf Table.~\ref{tab:perf comp}}, we provide the performance comparison between the proposed GQ-TSP algorithm and the VQA (including VQE and QAOA)-based baseline algorithms~\cite{khumalo2021investigation}. Criteria including success rate, qubit number and MT (mean running time on the Qiskit simulator) are considered in the comparison. The success rate (SR99) is the percentage amount of trials within 99\% of the optimal TSP solution. 

    As is shown in {\bf Table.~\ref{tab:perf comp}}, the proposed GQ-TSP has a significantly higher success rate (SR99) than VQE and QAOA. 
    Furthermore, as $N$ grows larger, the qubit consumption of the proposed GQ-TSP is also more efficient than the baselines. The consumption of the proposed algorithm only scales as $\mathcal{O}(N\log N)$ with a small constant factor guaranteed by the qubit-saving techniques, while both VQE and QAOA need exactly $N^2$ qubits. 

    In conclusion, compared with the baselines, the proposed GQ-TSP enjoys higher success rate and lower qubit consumption when $N$ is large. As a result, the GQ-TSP method will possibly be a promising TSP solver to run on a real-world fault-tolerant physical quantum computer.
    
    \begin{table*}
        \centering
        \begin{threeparttable}[b]
            \caption{Simulated Performance Comparison of Different Quantum TSP Algorithms}
            \label{tab:perf comp}
            \setlength{\tabcolsep}{2mm}
            \begin{tabular}{c|c|c|c|c|c|c|c|c|c}
                \toprule[1.5pt]
                
                \multirow{2}{*}{Size $N$}  &  \multicolumn{3}{c|}{VQE Method~\cite{khumalo2021investigation}}  & \multicolumn{3}{c|}{QAOA Method~\cite{khumalo2021investigation}} & \multicolumn{3}{c}{GQ-TSP\, (sparse/non-sparse)}  \\
                \cline{2-4} \cline{5-7} \cline{8-10}
                & SR99 & MT & \# qubits & SR99 & MT & \# qubits & SR99 & MT & \# qubits \\ \hline
                4  & 3.33  & 54.46s & 16 & 0 & 164.05s & 16 & 98.73   & 0.13h     & 23      \\
                5  & 26.67 & 92.17s & 25 & - & -       & 25 & 93.84   & 6.43h     & 25/31   \\
                6  & -     & -      & 36 & - & -       & 36 & 86.95   & 462.25h   & 31/39   \\
                7  & -     & -      & 49 & - & -       & 49 & 86.71   & 1340.5h   & 31/40   \\
                \bottomrule[1.5pt]
            \end{tabular}
            \begin{tablenotes}
                \item[$-$] The experimental data is not available due to either the absence of successful trials, or the difficulty of our classical simulators to handle quantum circuits of $\geq 31$ qubits. 
            \end{tablenotes}
        \end{threeparttable}
    \end{table*}

\section{Discussion}
    {\bf Qubit consumption.}
        Reducing qubit consumption is of practical importance, both from a simulating point of view, and from an NISQ-implementable aspect. 
        In our proposed GQ-TSP, the qubits are saved mainly by the qubit-efficient construction of the improved HCD oracle, the anchor register strategy, and the zeroed-ancilla reusing methods (will be thoroughly discussed in the {\bf Methods} section). 

        To gain quantitative insights into our proposed GQ-TSP algorithm, we analyze the total qubit usage in the case $N=6, m=2$, i.e., a graph containing 6 nodes with $d \leq 2^m=4$.
        First of all, according to the encoding method of $\ket{C}$, a number of $mN=2\times 6=12$ qubits are needed to encode a TSP cycle on a sparse graph, which also serve as the input of the CLC and HCD oracle. 
        Inside the HCD oracle, another set of quantum registers $\ket{I_k}$ are needed to store the locations of each tour step, consuming $n(\lfloor\frac{N}{k+1}\rfloor+k) = 12$ more qubits, with $n = \lceil \log_2N \rceil = 3$ being the qubit number required for natural encoding, $k = {\rm argmin}_{k'}\left(\lfloor\frac{N}{k'+1}\rfloor+k'\right) = 1$ being the number of $n$-qubit quantum registers in the HCD oracle, and $\lfloor\frac{N}{k+1}\rfloor = 3$ indicating the number of required anchor registers $\ket{A_\ell}$. 
        Besides, in the index forwarders ($F$ gates) of the oracle, $m=2$ additional qubits are used to temporarily store the result of the quantum multiplexer, which serves as input for the index converter (see {\bf Methods} for details). To perform checking operations of the tour, we employ $\sigma_0(N)-1 = 3$ qubits to store ${\rm OR}(\ket{I_1})$, ${\rm OR}(\ket{I_2})$, and ${\rm OR}(\ket{I_3})$, respectively. 
        Finally, two more qubits are used to store the results of the HCD oracle and the CLC oracle. 
        In conclusion, without considering the ancillary qubits for the CLC oracle, we consume a total number of $12+12+3+2+2=31$ qubits in the case of $N=6, m=2$ with sparse encoding.

        The CLC oracle is composed of the controlled $U$-operators, the QFT circuit, and the quantum comparator. The qubit consumption is the summation of all its components. 
        The QFT precision of the CLC oracle is set to $t=6$, which means that a number of $2t+1=13$ zeroed ancillas should be employed for the whole CLC oracle. 
        This is because a number of $t$ qubits are used to setup the uniform superposition state of $\ket{\ell}$, another $t$ qubits are used for the implementation of the controlled $U$-operators (see {\bf Methods} for details), and one ancillary qubit is used for storing the output of the CLC oracle.    
        However, since the HCD oracle consumes more ancillary qubits than the CLC oracle, we can reuse the ancilla qubits after the execution of the HCD oracle. 
        Thus, in the case of $N=6, m=2$, we still consume a total number of $12+12+3+2+2=31$ qubits, taking into consideration all the oracles and the qubit reusing strategies.
    
        Now we turn to the more general case. When the precision factor $t$ is fixed, the qubit consumption is directly determined by the problem size $N$ and the sparsity $m$ of the graph. 
        For the general case of sufficiently large problem size $N$, the qubit consumption is given by the following {\bf Theorem~\ref{thm:qubit usage}}.

        \begin{theorem} \label{thm:qubit usage}
            The qubit usage $n_{q}$ is $\mathcal{O}(mN+2\sqrt{N}\log_2N+\sigma_0(N))$, where $N$ is the number of cities, $m$ is the number of qubits needed to encode all neighbors of a city, and
            \begin{equation*}
                \sigma_{0} (N):=\#\{j\in\mathbb{Z}: 1\leq j \leq N, j|N\}.
            \end{equation*}
        \end{theorem}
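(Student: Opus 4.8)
The plan is to assemble the total qubit count term by term exactly as in the worked case $N=6$ of the \textbf{Discussion}, and then to bound the one non-elementary contribution — the location and anchor registers of the improved HCD oracle — by a short optimization over the split parameter $k$.

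First I would collect the register inventory. The cycle register $\ket{C}$ needs $mN$ qubits; inside the improved HCD oracle the intermediate location registers together with the anchor registers use $n(L+k_{\rm opt})$ qubits, where $n=\lceil\log_2 N\rceil$, $L=\lfloor N/(k+1)\rfloor$ is the number of anchors, and $k_{\rm opt}={\rm argmin}_k\, n\left(\lfloor N/(k+1)\rfloor+k\right)$; the index forwarders need $m$ scratch qubits for the multiplexer output; the checking step of the improved oracle needs one qubit per proper divisor $j\mid N$ with $1\le j<N$, i.e. $\sigma_0(N)-1$ qubits, to hold the values ${\rm OR}(\ket{I_j})$; and two further qubits store the HCD and CLC outputs. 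Because the CLC oracle's $2t+1$ ancillas are reused from those freed by the HCD oracle and $t$ is held fixed (hence $O(1)$), they add nothing asymptotically. This gives
\begin{equation*}
    n_q = mN + n(L+k_{\rm opt}) + m + \sigma_0(N) + 1 ,
\end{equation*}
which reproduces the count $12+12+2+3+2=31$ at $N=6$.

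The crux is the estimate $n(L+k_{\rm opt}) = O\bigl(2\sqrt{N}\log_2 N\bigr)$. Dropping the floor, the function $g(k)=N/(k+1)+k$ on $0\le k\le N-1$ is minimized, by AM--GM, at $k+1=\sqrt{N}$ with minimum value $2\sqrt{N}-1$. Choosing the admissible integer $k=\lceil\sqrt{N}\rceil-1$ then makes $\lfloor N/(k+1)\rfloor+k \le 2\sqrt{N}+O(1)$, and since $k_{\rm opt}$ is by definition no worse, $L+k_{\rm opt}\le 2\sqrt{N}+O(1)$. Multiplying by $n=\lceil\log_2 N\rceil\le \log_2 N+1$ yields $n(L+k_{\rm opt})\le 2\sqrt{N}\log_2 N + O(\sqrt{N}+\log_2 N)$. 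Combining with the other terms of $n_q$ and absorbing $O(m)$ into $mN$, we obtain $n_q = O\bigl(mN + 2\sqrt{N}\log_2 N + \sigma_0(N)\bigr)$.

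I expect the only technical care needed to be the bookkeeping of the floor functions in the optimization — one must check that rounding $\sqrt{N}$ to the nearest integer perturbs $g$ by only an additive constant, which holds because $g$ is flat near its minimum — together with a sanity check that no reused-ancilla block (in particular the CLC comparator and the QFT register) secretly dominates; both are immediate once $t$ is treated as a constant.
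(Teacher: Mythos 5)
Your proof is correct and follows essentially the same route as the paper's: itemize the qubit cost by component (cycle register $mN$, anchor/location registers $n(L+k)$, the $\sigma_0(N)-1$ OR qubits, and lower-order scratch/output/reused-CLC ancillas) and identify the anchor-register term as the source of the $2\sqrt{N}\log_2 N$ contribution. You are in fact more careful than the paper on the one nontrivial step, supplying the AM--GM optimization over $k$ that the paper only asserts via ``$\sim 2\sqrt{N}$''.
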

        \begin{proof}
            The qubit usage of our proposed GQ-TSP is divided into three parts.
            \begin{enumerate}
                \item The cycle register $\ket{C}$ consumes $mN$ qubits. 
                \item A total number of $n(k+\lfloor\frac{N}{k+1}\rfloor)$ qubits are consumed inside the HCD and CLC oracles. This is because each city needs an $n$-qubit register to perform natural encoding, and $k+\lfloor\frac{N}{k+1}\rfloor$ copies of such register are required according to the anchor register strategy, where $k={\rm{argmin}}\left(\lfloor\frac{N}{k+1}\rfloor+k\right) \sim 2\sqrt{N},\, N\to\infty$ is the number of $n$-qubit quantum registers. 
                \item $\sigma_0(N)-1$ qubits are utilized to store the OR results ${\rm OR}(\ket{I_k})$. 
                \item Other qubit usages are all within $o(N)$, thus can be ignored asymptotically. As for the QFT circuits and the Quantum Comparator circuits, they do not consume extra qubits, since the ancilla qubits needed by them can be reused from the HCD oracle.
            \end{enumerate}
            Therefore, it suffices to introduce $\mathcal{O}(mN+2\sqrt{N}\log_2N+\sigma_0(N))$ qubits to execute this algorithm.
        \end{proof}
        According to {\bf Theorem~\ref{thm:qubit usage}}, a general approximation of the qubit usage is given by $\mathcal{O}(mN+2\sqrt{N}\log_2N+\sigma_0(N))$, which is asymptotically linear in $N$. 
        The table below shows the qubit consumption for different problem size $N$, illustrating the significant qubit-efficiency of the proposed GQ-TSP with all the optimization techniques, compared to the non-optimized naive implementation. 

        \begin{table}
            \centering
            \begin{threeparttable}
                \caption{Overall Qubit Consumption}
                \begin{tabular}{c|c|c|c|c|c}
                    \toprule[1.5pt]
                    \multirow{2}{*}{Size $N$} & \multirow{2}{*}{$n$} & \multirow{2}{*}{\# LR\tnote{$\dagger$}\;\;qubits} & \multicolumn{3}{c}{\# total qubits} \\ 
                    \cline{4-6} 
                        & & & 4-sparse & dense & non-optimized   \\
                    \hline
                    4  & 2 & 9    & 23   & 23   & 36      \\
                    5  & 3 & 12   & 25   & 31   & 46      \\
                    6  & 3 & 12   & 31   & 39   & 52      \\
                    7  & 3 & 20   & 31   & 40   & 58      \\
                    8  & 3 & 20   & 35   & 45   & 64      \\ \bottomrule[1.5pt]
                \end{tabular}
                \begin{tablenotes}
                    \item[$\dagger$] \# LR qubits represent the number of ancillary qubits in the Location Registers $\ket{I_k}$, which are used for storing the middle results of the HCD oracle. 
                \end{tablenotes}
            \end{threeparttable}
        \end{table}

        {\bf Circuit depth.}     
        The circuit depth is proportional to the running time of the quantum algorithm, which is both applicable on a classical simulator and on a real quantum hardware. 

        For the HCD oracles, the main building blocks are AND gates (C$^n$NOT gates), since most of the circuit depths are consumed by the index converters ($F$ gates), and the OR gates (realized by AND gates and $X$ gates).  The AND gate requires linear complexity on the number of its inputs. 
        As analyzed in the {\bf Methods} section, the QAQR of address length $n$ consumes $\mathcal{O}(Ln2^{n})$ depths. For the index forwarder, the address is of length $n=\lceil \log_{2}{N}\rceil$, and $L=m$, so each QAQR in the index forwarder consumes a depth of $\mathcal{O}(Nnm)$. 
        Another component of the index forwarder is the index converter (QACR), with $m + n$  input qubits and $n$ output qubits. It consumes a depth of $\mathcal{O}(n(m+n)2^{m+n})=\mathcal{O}(Ndn(m+n))$ (see {\bf Methods}). Thus, a single index forwarder consumes a depth of $\mathcal{O}(Nn((d+1)m+dn))$. During the computation of the HCD oracle, $N$ index forwarders are invoked. Therefore, the total depth for a HCD oracle is $\mathcal{O}(N^2 n((d+1)m+dn) )$, and if we assume $d$ be constant, the result is $\mathcal{O}(N^2 \log^{2}{N})$. 

        For the CLC oracles, the main part is the controlled $U$-operators. According to its construction, the depth of a single $U$-operator is $\mathcal{O}(2^{m})$. Since the CLC oracle requires one application of the $U$-operator to each of the $\ket{C_j},\,1\leq j\leq N$, the depth of the QPE module is $\mathcal{O}(N2^{m} + t^2)$, where $t$ is the number of precision qubits of the QFT subroutine. With the assumption that $d$ and $t$ are invariant with the increase of the problem size $N$, the total complexity of the CLC oracle is $\mathcal{O}(N)$.

        Finally, the GAS iteration should be repeated $\mathcal{O}(2^{Nm /2})$ times according to \eqref{eqn:iter_number} to ensure optimal amplitude amplification. 
        To sum up, the overall depth of the whole GQ-TSP algorithm is $\mathcal{O}(2^{mN /2}N^2 \log^2{N})$.

\appendix

\section*{Methods}

\subsection{Implementation of the CLC oracle.}
    The CLC oracle contains three main parts: the $U$-operators, the inverse quantum Fourier transform (iQFT) module, and the quantum comparator. The iQFT module~\cite{nielsen2002quantum} is well-studied, so we implement it with the textbook techniques. 
    The quantum comparator with a classical pre-defined threshold $C_{\rm sh}$ can be implemented directly in Qiskit. Consequently, the difficulty of realizing this CLC oracle mainly lies in the implementation of the $U$-operators, which we will discuss in detail. 

    {\it Pre-processing.} Since the iQFT is subject to a $2\pi$ phase ambiguity, the maximal allowed cycle length should be normalized to be smaller than $2\pi$. 
    Note that each term of the total cycle length, i.e., the components of a $U$-operator $U_{j}:\, 1\leq j\leq N$, are determined by the parameters $\theta_{j,k}$. 
    Thus, in order to avoid such ambiguity, we normalize the adjacency matrix $A$ by the sum of all of its entries, and then let: $\theta_{j,k} = 2\pi a_{k,P_{j}[k]}$ for $0 \leq j < N$ and all valid $k$. 

    Since $U_{j}$ acts on the $m$-qubit register $\ket{C_j}$, the matrix representation of $U_j$ has both $2^{m}$ rows and columns. 
    Hoever, $\#P_j$ may be strictly less than $2^m$, leading to the inability to fill the diagonal of a unitary matrix $U_j$. 
    To address this problem, these unspecified diagonal entries are filled with 1's ( $\exp(\ri 0)$ ), and accordingly, the optimization target of the GAS-QTSP is shifted to maximizing the cycle length instead of minimizing it, in order to exclude these unspecified paths with zero costs. 
    Thus, the problem pre-processing shoud start with converting the cost-minimization problem in to an equivalent maximization one.   

    {\it Construction.} The controlled $U$-operators are constructed recursively in our experiments. For the simplest case $m=1$, since the $U$-operator needs to provide $2^m=2$ different phase shifts, it is just a controlled-phase gate:
    \begin{equation}
        \begin{aligned}
            U = \begin{bmatrix} e^{\ri \theta_0} & 0 \\ 0 & e^{\ri\theta_1} \end{bmatrix} = e^{\ri\theta_0}
            \begin{bmatrix} 1 & 0 \\ 0 & e^{\ri(\theta_1-\theta_0)} \end{bmatrix},
        \end{aligned}
    \end{equation}
    where the controlled-$U$ gate can be represented by 
    \begin{equation}
        \text{controlled-}U = 
        \begin{bmatrix} 1 & & & \\
            & 1 & & \\
            & & e^{\ri \theta_0} &  \\ 
            & &  & e^{\ri \theta_1} 
        \end{bmatrix}.
    \end{equation}
    For $m=2$, we design a special quantum circuit to implement the unitary operator 
    \begin{equation*} 
        U = {\rm diag}\left( \exp\left(\ri\theta _{0}\right),  \exp\left(\ri\theta _{1}\right), \exp\left(\ri\theta_{2}\right), \exp\left(\ri\theta _{3}\right)\right).
    \end{equation*}
    Let 
    \begin{equation*}
        x=-\frac{1}{2}\left( \theta_3 - \theta_2 - \theta_1 + \theta_0 \right),
    \end{equation*} 
    then, inspired by the paper~\cite{srinivasan2018efficient}, the corresponding quantum circuit implementing $U$ can be decomposed into two operators $V_1$ and $V_2$:
    \begin{subequations}
        \begin{align}
            V_1 = &  \begin{bmatrix} e^{\ri\theta_0} & 0 & 0 & 0 \\ 0 & e^{\ri\theta_1} & 0 & 0 \\ 0 & 0 & e^{\ri\theta_2} & 0 \\ 0 & 0 & 0 & e^{\ri(\theta_2 + \theta_1 - \theta_0)} \end{bmatrix} \\
            V_2 = &  \begin{bmatrix} 1 & 0 & 0 & 0 \\ 0 & 1 & 0 & 0 \\ 0 & 0 & 1 & 0 \\ 0 & 0 & 0 & e^{-\ri 2x} \end{bmatrix},
        \end{align}
    \end{subequations}
    where one can easily verify that $U=V_1 V_2$. The controlled-$V_1$ operator is implemented in {\bf Fig.~8} by further decomposing it into controlled phase gates. The implementation of controlled-$V_2$ gate is described in {\bf Fig.~9}. In both figures, $P(\theta)$ denotes the controlled phase gate with rotation angle $\theta$.

    \begin{figure}
        \centering
        \includegraphics[width=0.85\linewidth]{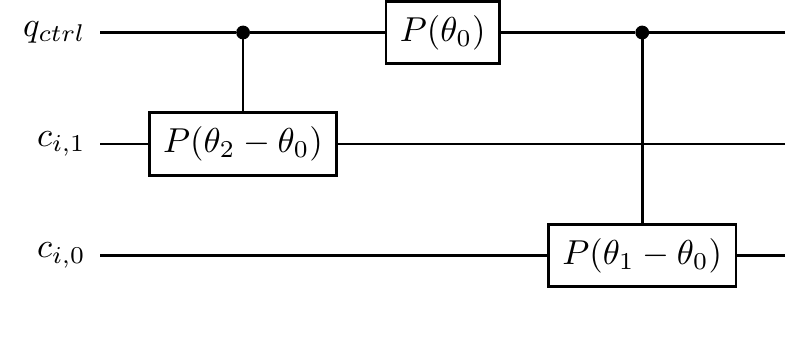} 
        
        \caption{Implementation of operator $V_1$, where $m=2$. $q_{\rm ctrl}$ is the control qubit; $c_{i,1}$ and $c_{i,0}$ are qubits in $\ket{C_i}$; $\theta_0, \theta_1, \theta_2, \theta_3$ correspond to phase shifts on states $\ket{100}, \ket{101}, \ket{110},$ and $\ket{111}$, respectively. }
        \label{n1_2_circ_ex1}
    \end{figure}
        
    \begin{figure}
        \centering
        \includegraphics[width=1\linewidth]{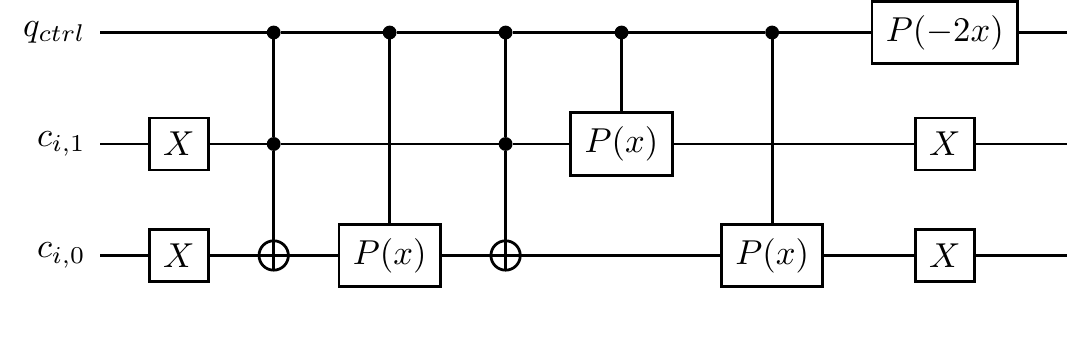} 

        \caption{Implementation of operator $V_2$, where $m=2$. }
        \label{fig:n1_2_circ_ex2}
    \end{figure}

    For larger $m>2$, we use recursive construction by only one extra ancillary qubit. Since there are $m$ ``choice-encoding'' qubits $\ket{c_{j,(m-1)} c_{j,(m-2)}\cdots c_{j,0}}$, we can partition matrix $U_j$ along the diagonal into two smaller matrices, with each of the sub-matrix inheriting half of the original values $\theta_{j,k}$. Thus, we can recursively construct two smaller $U$-operators, controlled by the first ``choice-encoding'' qubit $\ket{c_{j,(m-1)}}$ and its flipped version $X\ket{c_{j,(m-1)}}$, respectively. Note that an ancillary qubit is introduced to temporarily save the AND result of the control qubit $q_{\rm ctrl}$ and the first qubit $\ket{c_{j,(m-1)}}$. 

\subsection{Efficient implementation of quantum logical gates.}
    The implementation of AND gate, i.e. CNOT with $n$ control qubits (C$^n$NOT), is a basic component of our TSP quantum circuit, and then OR gates can be easily constructed from C$^n$NOT and X gates. Naive realization of C$^n$NOT consumes $(n-2)$ additional ancillas and $(2n-3)$ Toffoli gates. Though it is a linear consumption on qubits and circuit depth, it is still possible to reduce the number of ancillas to 1, while keeping the number of Toffoli gates within $\mathcal{O}(n)$. In state-of-the-art quantum technology, qubits are   computational resources that are regarded to be much more expensive than circuit depth, which is the same case in classically simulated quantum circuits. So our main efforts are devoted to qubit-reducing.

    However, is it possible to implement the C$^n$NOT gate without any ancillas? In fact, it is impossible to do the C$^n$NOT operation only through basic Toffoli gates. The reason is that, the C$^n$NOT gate swaps the all-one state $\ket{11\cdots 11}$ and the state $\ket{11\cdots 10}$, and leaves other states unchanged, so it is an odd permutation on $2^n$ elements. However, the basic Toffoli gates do not touch all the qubits, thus being even permutations. Cascading even permutations cannot result in an odd permutation, so we cannot implement C$^n$NOT only by basic gates only without an ancilla. 
    
    Our $O(n)$ construction requires only 1 ``zeroed'' ancilla qubit whose initial state is $\ket{0}$, which is proposed in the paper~\cite{xu2015reversible}. To fulfill this target, we first decompose the C$^n$NOT into four C$^{n/2}$NOT gates (approximately $n/2$ if $n$ is odd), and then implement each half-sized gate with the ``borrowed'' ancilla technique. Thus, the construction of universal C$^n$NOT consuming only 1 borrowed qubit $b$ can be achieved within only 4 steps, which is shown in the following
    \begin{enumerate}[(1)]
        \item Toggle $b$ conditioned on $q_{0:\lfloor n/2\rfloor}$. Use $q_{(\lfloor n/2\rfloor +1):(n-1)}$ as borrowed ancillas,
        \item Toggle $r$ conditioned on $q_{(\lfloor n/2\rfloor +1):(n-1)}$ and $b$. Use $q_{0:\lfloor n/2\rfloor}$ as borrowed ancillas,
        \item Toggle $b$ conditioned on $q_{0:\lfloor n/2\rfloor}$. Use $q_{(\lfloor n/2\rfloor +1):(n-1)}$ as borrowed ancillas,
        \item Toggle $r$ conditioned on $q_{(\lfloor n/2\rfloor +1):(n-1)}$ and $b$. Use $q_{0:\lfloor n/2\rfloor}$ as borrowed ancillas,
    \end{enumerate}
    where $r$ denotes the result qubit, and $q_{0:(n-1)}$ denotes the $n$ input qubits. Step (2) and step (4) together form a toggle-detection circuit on $b$, whose toggling is conditioned on $q_{0:\lfloor n/2\rfloor}$. Step (1) followed by (3) ensures the borrowed ancilla $b$ to be unaffected. Thus, we convert the construction problem of C$^n$NOT into two C$^{\lfloor n/2\rfloor+1}$NOTs and two C$^{n-\lfloor n/2\rfloor}$NOTs, and then implement the four CNOTs by borrowing qubits from each other. Assume $n$ is even, then the circuit depth is $8(n-3)$ in total (counted in Toffolis), which takes linear time to execute within constant number of ancillas.

    \begin{figure}\label{c4not_borrowed_ancilla}
        \centering
        \includegraphics[width=0.9\linewidth]{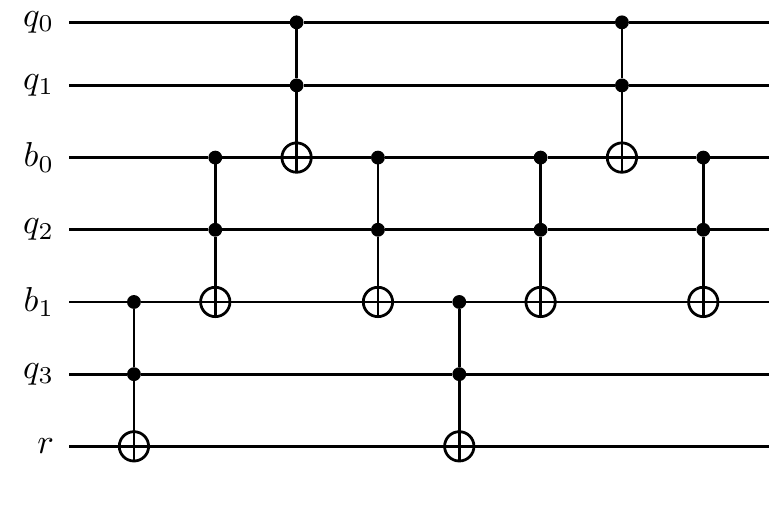} 
            
        \caption{C$^4$NOT implemented with 2 borrowed ancillas $b_0$ and $b_1$. $q_{0:3}$ are the input qubits; $r_0$ is the result qubit. Note that the result qubit $r$ is flipped if and only if $q_3 =\ket{1}$, and $b_1$ is flipped by the two Toffoli gates that act on $b_1$. And also, $b_1$ is flipped if and only if $q_2 = \ket{1}$, and $b_0$ is flipped by the left-top Toffoli. Since each flipping is conditioned both on an input qubit being $\ket{1}$ and on the previous flipping, it finally constitutes a C$^4$NOT on $r$. The right half side of the circuit undoes the flipping operations on borrowed qubits and restores them to their initial states, ensuring the borrowed ones to be unaltered. The number of borrowed ancillas needed is $(n-2)$ and the circuit depth is $4(n-2)$, respectively.}
    \end{figure}

\subsection{Realization of the index forwarder. }
    The index forwarder $F$ plays an important role in the traveling part of the HCD oracle, since it forward-computes the next city index $\ket{I_{k+1}}$ given the current index $\ket{I_k}$.  
    An index forwarder in each step consists of two parts: a {\it quantum multiplexer} to compute the sparse-encoded index of the next city $\ket{S_{k+1}}$, and an {\it index converter} to get the index of the next city $\ket{I_{k+1}}$ from $\ket{S_{k+1}}$. 
    From a general point of view, both the {\it quantum multiplexer} and the {\it index converter} are instances of the quantum-addressed register.
    However, the difference is that the quantum multiplexer  extracts quantum data from the input cycle register $\ket{C}$, while the index converter extracts classical data in a fixed classical lookup table. 
    In practice, the {\it quantum multiplexer} is implemented by quantum-addressed quantum registers (QAQR), while the {\it index converter} is implemented by quantum-addressed classical registers (QACR). The construction of the QAQR and QACR from basic quantum gates will be elaborated in the following, but before that we need to introduce what are registers. 

    Registers are essential memory components in modern classical processors. 
    Synchronized by a clock, classical registers can take in the input bits at each clock rise, and keep its value unchanged throughout the clock period. 
    Mathematically, a register stores a numerical value $R$. Several registers form a register file (RF), which can be represented by a group of $2^n$ length-$L$ bit arrays $R[j]\in\{0,1\}^L$, where $0\leq j<2^n$ is the $n$-bit index, or the ``address'' in computer science. 
    Given a certain address $j$, the value $R[j]$ stored in the RF can be immediately accessed. 
        
    In classical registers, the address $j$ is a stable deterministic value during the clock period. However, different from classical registers, a quantum-addressed quantum register (QAQR) allows $j$ to be in superposition, and it outputs all the possible $R[j]$ simultaneously, following the original superposition coefficients. 
    Since quantum data are stored in qubits, the function of a QAQR is to select a certain group of qubits (the data) conditioned on another group of qubits (the address). 
    To realize this function, we design the quantum multiplexers. 
    A single-qubit $n$-addressed quantum multiplexer is addressed by $n$ qubits $A_{n-1}A_{n-2}\cdots A_{0}$, and the output of the QAQR can be expressed by Boolean algebra as follows (``$\oplus$'' denotes modulo-2 addition, i.e. XOR)
    \begin{equation}
        S=\bigoplus_{j=0}^{2^n-1}\left(\prod_{k=0}^{n-1}A_k^{(j_k)}\right)R[j],
        \label{Qreg_expansion}
    \end{equation}
    where $(j_{n-1}j_{n-2}\cdots j_0)_2$ is the binary representation of $j$, and 
    \begin{equation*}
        A_k^{(b)} = \left\{ \begin{aligned}  
            {\rm not}({A}_k), \quad &b=0\\
            A_k,\quad & b=1.
        \end{aligned}\right.
    \end{equation*}
    The QAQR is then defined by 
    \begin{equation}
        {\rm QAQR}\ket{A}\ket{0} = \ket{A}\ket{S}.
    \end{equation}

    Since in~\eqref{Qreg_expansion} there are $2^n$ terms, where each term includes $n$ AND operations, direct implementation~\eqref{Qreg_expansion} of this single-qubit $n$-addressed multiplexer requires $2^n$ copies of $(n+1)$-input AND gates and $n2^{n-1}$ X gates. The AND gate consumption is reasonable because the scale of input data $R[j]$ is also $\mathcal{O}(2^n)$, but we can reduce the number of X gates to $2^n$, i.e., a linear reduction, by traversing all the terms following the Gray code~\cite{bitner1976efficient}. Thus, we have designed a quantum-addressed multiplexer that consumes $\mathcal{O}(2^n)$, i.e., linear number of gates with respect to the scale of the input data. Based on this design of single-qubit $n$-addressed quantum multiplexer, we can contruct an $n$-addressed quantum register file of length $L$ qubits, by simply stacking $L$ copies of single-qubit $n$-addressed quantum multiplexer, within $\mathcal{O}(Ln2^n)$ gates (the factor $n$ comes from implementing the AND gate). 

\subsection{Algorithm Design}
    Our GQ-TSP algorithm can be summarized as follows: 

    \begin{breakablealgorithm}
        \caption{GQ-TSP Algorithm} \label{alg:QTSP}
        \newcommand{\nl}{\Statex}
        \setstretch{1.35}
        \begin{algorithmic}[1] 
            \Require Quantum registers of lengths $mN, t, 1, 1, 1$ qubits, represented by $\ket{C}$, $\ket{T}$, $\ket{R_{\rm CLC}}$, $\ket{R_{\rm HCD}}$, and $\ket{R}$, respectively. 
            \Ensure The cycle with the lowest cost $C^*$
            \State $i \leftarrow 0$.
            \State $I_{\rm opt} \leftarrow \left\lceil\pi \sqrt{2^{mN}} / 4\right\rceil$.
            \State $\ket{C} \leftarrow \ket{0}^{\otimes mN}$, $\ket{T} \leftarrow\ket{0}^{\otimes t}$.
            \State $\ket{R_{\rm CLC}}\leftarrow\ket{0}$,  $\ket{R_{\rm HCD}}\leftarrow\ket{0}$.
            \State $\ket{R}\leftarrow\ket{-}$.
            \State Select a proper comparing threshold $C_{\rm th}$ from several samples on the graph.
            \State Initialize $\ket{C}$ and $\ket{T}$ to uniform superposition state: \nl
            $\ket{C}\otimes\ket{T}\otimes\ket{R_{\rm CLC}}\otimes\ket{R_{\rm HCD}}\otimes\ket{R}= \sum_C a_C\ket{C}\otimes \sum_k\ket{k} \otimes \ket{0}\otimes\ket{0}\otimes\ket{-}$ ($a_C=\frac{1}{\sqrt{2^{mN}}}$)

            \While{ $i< I_{\rm opt}$}
                \State Apply controlled-$U$ gates to $\ket{T}$, controlled by $\ket{C}$, $c={\rm cost}(\ket{C})$: \nl $\rightarrow \sum_C \left(a_C\ket{C}\otimes \sum_k e^{2\pi\ri kc/2^t}\ket{k}\otimes \ket{0}\otimes\ket{0}\otimes\ket{-}\right)$

                \State Apply iQFT to $\ket{T}$, obtaining the cost of each $\ket{C}$: \nl $\rightarrow \sum_C \left(a_C\ket{C}\otimes \ket{{c}} \otimes \ket{0}\otimes\ket{0}\otimes\ket{-}\right)$

                \State Use quantum comparator to compare $\ket{c}$ with the threshold $C_{\rm th}$: \nl $\rightarrow \sum_C \left(a_C\ket{C}\otimes \ket{{c}} \otimes \ket{\mathbbm{1}_{c<C_{\rm th}}}\otimes\ket{0}\otimes\ket{-}\right)$

                \State Uncompute and restore $\ket{T}$ to the initial state \nl$\rightarrow \sum_C \left(a_C\ket{C}\otimes \sum_k \ket{k} \otimes \ket{\mathbbm{1}_{\{c<C_{\rm th}\}}}\otimes\ket{0}\otimes\ket{-}\right)$

                \State Use HCD oracle to set $\ket{R_{\rm HCD}}$: \nl $\rightarrow \sum_C \left(a_C\ket{C}\otimes \sum_k \ket{k} \otimes \ket{\mathbbm{1}_{\{c<C_{\rm th}\}}}\otimes\ket{\mathbbm{1}_{\{C~{\rm valid}\}}}  \right.$\nl$\left.\otimes\ket{-}\right)$

                \State Apply Toffoli gate on $\ket{R}$, where $\mathbbm{1}_C = \mathbbm{1}_{\{c<C_{\rm th}\}}\cdot\mathbbm{1}_{\{C~{\rm valid}\}}$: \nl $\rightarrow \sum_C \left(a_C\ket{C}\otimes \sum_k \ket{k} \otimes \ket{\mathbbm{1}_{\{c<C_{\rm th}\}}}\otimes\ket{\mathbbm{1}_{\{C~{\rm valid}\}}} \right.$ \nl $\left.\otimes (-1)^{\mathbbm{1}_{C}}\ket{-}\right)$

                \State Uncompute $\ket{R_{\rm HCD}}$: \nl $\rightarrow \sum_C \left((-1)^{\mathbbm{1}_{C}}a_C \ket{C}\otimes \sum_k \ket{k} \otimes \ket{\mathbbm{1}_{\{c<C_{\rm th}\}}} \right.$\nl$\left.\otimes\ket{0} \otimes \ket{-}\right)$

                \State Uncompute $\ket{R_{\rm CLC}}$: \nl $\rightarrow \sum_C \left((-1)^{\mathbbm{1}_{C}}a_C \ket{C}\otimes \sum_k \ket{k}\otimes \ket{0}\otimes\ket{0} \otimes \ket{-}\right)$

                \State Apply the diffusion operator, which is equivalent to updating the coefficient $a_C$: \nl$\rightarrow (2\ket{\psi}\bra{\psi}-I)\sum_C \left((-1)^{\mathbbm{1}_{C}}a_C \ket{C}\otimes \sum_k \ket{k}\otimes \ket{0} \right.$\nl$\left.\otimes\ket{0} \otimes \ket{-}\right)$

                \State $i \leftarrow i+1$
            \EndWhile
            \State $C^* \leftarrow$ Measurement result of $\ket{C}$
        \end{algorithmic}
    \end{breakablealgorithm}

    It is worth noting that, more ancillary qubits may be used during the execution of CLC and HCD oracles, which are not directly shown in the input of {\bf Algorithm~\ref{alg:QTSP}}.

% \subsection*{Authors' contributions}
% M.S., I.C., E.B., and M.P.W. conceived and performed experiments, developed software, and built optical setups. I.C. developed software and optical setups for widefield data collection and chip-scale sample screening. M.S. analyzed the data and wrote the manuscript with assistance from I.C., E.B., M.P.W., and K.C.C.  K.C.C. fabricated QR codes designed by M.P.W. SiV Sample A was produced by J.M., A.M., S.H., D.B., and P.B.D. SiV Sample B was focused ion beam implanted by M.T. and E.S.B. D.R.E. supervised the project. All authors discussed the results and contributed to the manuscript.

\bibliographystyle{naturemag}
\bibliography{nature_template}

\pagebreak
\newpage
\onecolumngrid
\beginsupplement

\section*{Supplementary Numerical Results.}
    We present our experimental results in the simplest cases: $N=4,5,6$. To ensure that our numerical simulation experiment can be carried out by a classical computer within reasonable time, we assume that the maximum degree of the TSP graph does not exceed $4$, i.e. $d\leq 4$. So $m =2$ is sufficient for encoding each path choice. 

    \begin{figure}
        \centering
        \begin{minipage}{1\linewidth}
            \centering
            \includegraphics[width=1\linewidth]{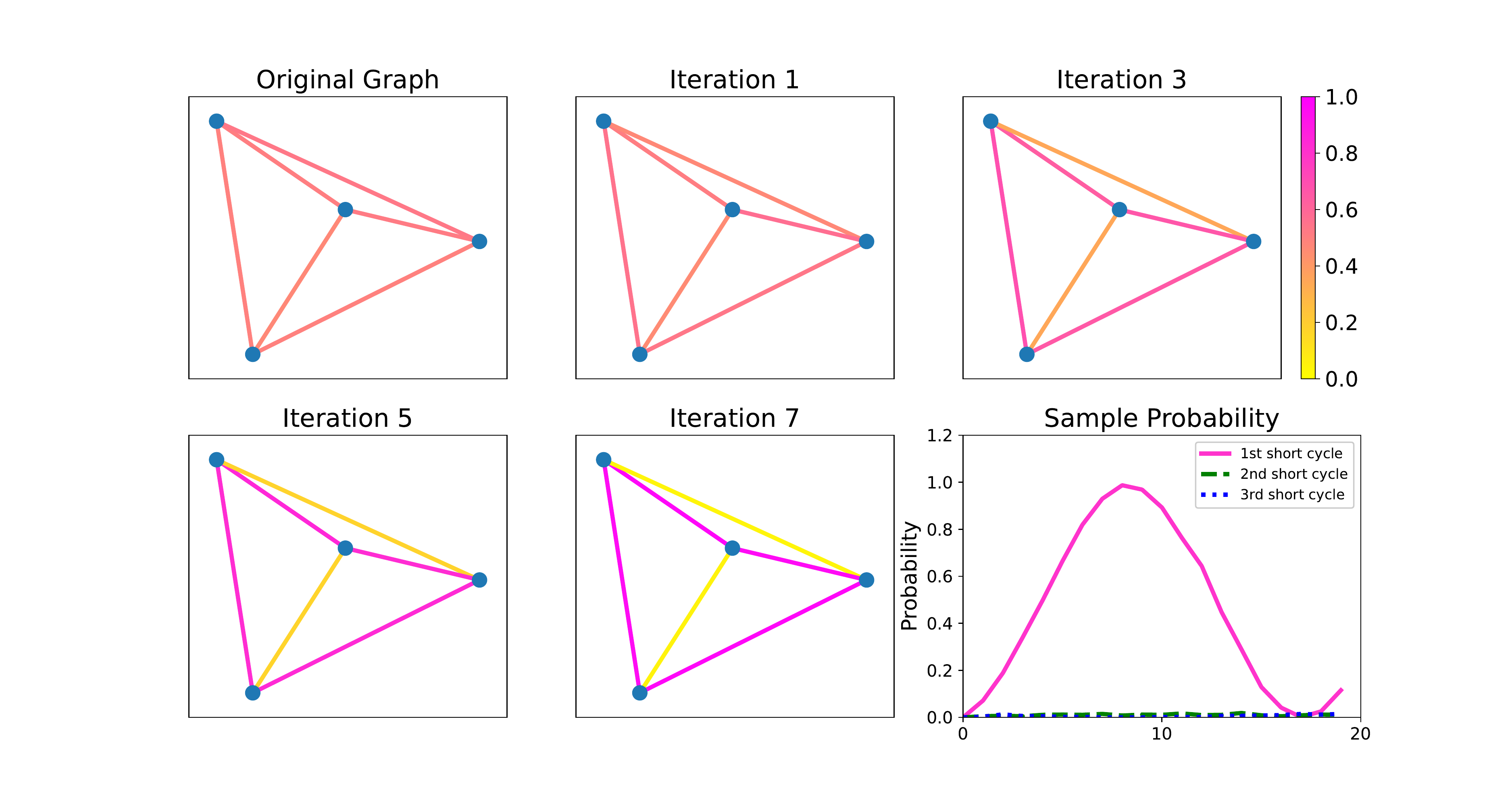}
        \end{minipage}
        \caption{$N=4$, statistical accumulative data of the output of our quantum circuit.}
        \label{fig:N4_exp_results}
    \end{figure}
        
    \begin{figure}
        \centering
        \begin{minipage}[t]{1\linewidth}
            \centering 
            \includegraphics[width=1\linewidth]{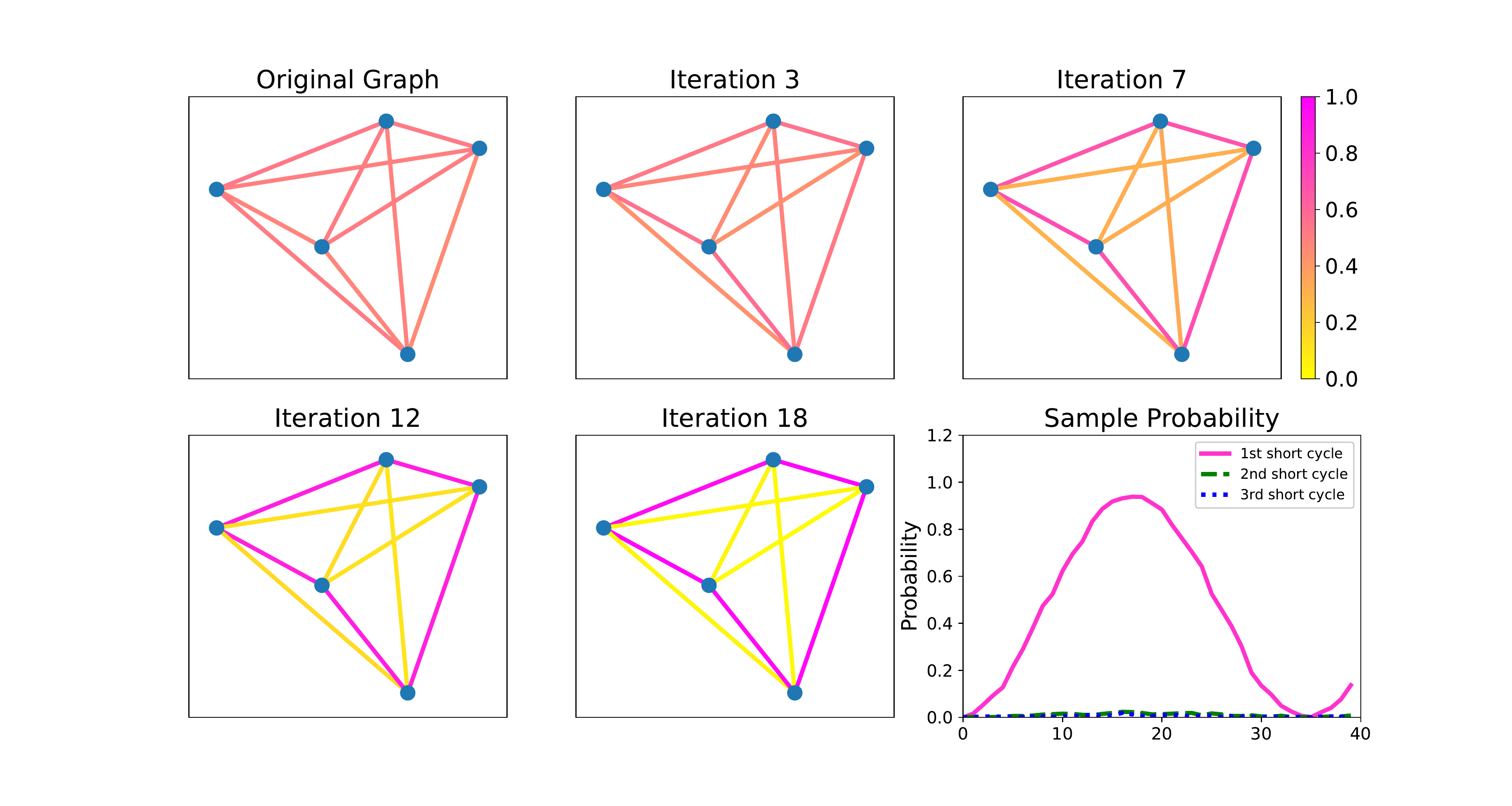}
        \end{minipage}
        
        \caption{$N=5$, statistical accumulative data of the output of our quantum circuit.}
        \label{fig:N5_exp_results}
    \end{figure}

    The graphs are randomly generated with edge weights drawn from a uniform distribution on $[0,1]$, and subject to the degree constraint $d_v\leq 4$. After different numbers of iterations, we sampled the output of the quantum circuit, and recorded the frequency of each edge being observed. As we can see in the first five subfigures of each case, the frequency of the shortest path grows steadily as the number of Grover iterations increases. Furthermore, the sample probability of the shortest path also takes a sinusoidal waveform, which is consistent with the theory of a Grover rotation. 

    For each case $N=4,5,6,7$, we can empirically find the optimal iteration number $I_{\rm opt} = \argmax_kP_1(k)$ where the probability of observing the shortest cycle reaches its maximum, where $P_1(k)$ denotes the observing probability of the 1-st short cycle after $k$ Grover iterations. %A polynomial speedup is observed in the following $k_{\rm opt}-N$ plot. %% TODO: why?

    \begin{table}[H]
        \centering
        % \begin{tabular}{@{}ccc@{}}
        \begin{tabular}{p{3cm}<{\centering} | p{3cm}<{\centering} | p{3cm}<{\centering}}
        \toprule[1.5pt]
            $n$ & $N$   & $I_{\rm opt}$ \\ \hline
            4 & 16  & 8         \\  
            5 & 32  & 18        \\  
            6 & 64  & 27        \\  
            7 & 128 & 70        \\  \bottomrule[1.5pt]
        \end{tabular}
    \end{table}

\end{document}